  \providecommand\BibTeX{{%
    \normalfont B\kern-0.5em{\scshape i\kern-0.25em b}\kern-0.8em\TeX}}}
\newcommand{\PHB}[1]{\noindent\textbf{#1}\hspace{.5em}} 
\newcommand{\PHM}[1]{\vspace{.2em}\noindent\textbf{#1}\hspace{.5em}} 
\renewcommand\href{\color{red}\rmfamily\itshape}
\newenvironment{alg}[1][htb]
{ 
  \begin{algorithm}[#1]%
}{\end{algorithm}}
\renewcommand\footnotetextcopyrightpermission[1]{} 
\crefname{section}{§}{§§}
\crefname{Section}{§}{§§}
\newtheorem{theorem}{Theorem}
\begin{document}

\hypersetup{linkcolor=blue, citecolor=magenta}

\title{FLASHE: Additively Symmetric Homomorphic Encryption for Cross-Silo Federated Learning}

\author{Zhifeng Jiang}
\affiliation{HKUST}
\email{zjiangaj@cse.ust.hk}

\author{Wei Wang}
\affiliation{HKUST}
\email{weiwa@cse.ust.hk}

\author{Yang Liu}\authornote{Work done at WeBank.}
\affiliation{Tsinghua University}
\email{liuy03@air.tsinghua.edu.cn}

\begin{abstract}
    Homomorphic encryption (HE) is a promising privacy-preserving technique for cross-silo federated learning (FL), where organizations perform collaborative model training on decentralized data. Despite the strong privacy guarantee, general HE schemes result in significant computation and communication overhead. Prior works employ batch encryption to address this problem, but it is still suboptimal in mitigating communication overhead and is incompatible with sparsification techniques.
  
    In this paper, we propose FLASHE, an HE scheme tailored for cross-silo FL. To capture the minimum requirements of security and functionality, FLASHE drops the asymmetric-key design and only involves modular addition operations with random numbers. Depending on whether to accommodate sparsification techniques, FLASHE is optimized in computation efficiency with different approaches. We have implemented FLASHE as a pluggable module atop FATE, an industrial platform for cross-silo FL. Compared to plaintext training, FLASHE slightly increases the training time by $\leq6\%$, with no communication overhead induced.
\end{abstract}

\maketitle

\pagestyle{plain}  

\section{Introduction}
\label{sec:intro}

Cross-silo Federated learning (FL)~\cite{kairouz2019advances} has emerged as a promising paradigm that allows multiple organization clients (e.g., medical or financial)~\cite{flmedical, courtiol2019deep, webankfinancial} to collaboratively train a shared model with decentralized private datasets. In cross-silo FL, there is a central server maintaining a global model state (e.g., gradients or weights) by iteratively aggregating local updates collected from clients. To minimize information leakage, \textit{homomorphic encryption} (HE) is used as the \textit{de facto} privacy-preserving technique, allowing clients to encrypt their updates in a way such that the server can perform aggregation directly on \textit{ciphertexts} without deriving anything meaningful about the plaintexts.

However, general HE schemes suffer from significant inefficiency in both \textit{computation} and \textit{communication} (\cref{sec:background_he}). For example, prevalent schemes like the Paillier \cite{paillier1999public}, FV \cite{fan2012somewhat} and CKKS\footnote{Also called HEAAN, though we will stick to CKKS in this paper.} \cite{cheon2017homomorphic} all entail computationally intensive cryptographic operations such as modular multiplications and polynomial reductions, yielding a latency of around 24-81 minutes to encrypt a CNN with 1.2M parameters. In comparison, an iteration of the corresponding plaintext training only takes 13 seconds in our testbed experiment. To make matters worse, the size of the resulting ciphertexts is increased to around 0.6-483 GB, incurring prohibitive communication latency, especially in the geo-distributed settings.

To address these problems, existing solutions~\cite{zhang2020batchcrypt} propose to \textit{batch} multiple plaintexts into a single one so that en-/decryption and other operations can be performed in a data-parallel manner. This is typically achieved by concatenating plaintexts as with Paillier \cite{liu2019secure, zhang2020batchcrypt}, or applying more advanced encoding techniques as with FV and CKKS \cite{smart2010fully, smart2014fully}. While batch encryption techniques significantly amortize the computation overhead, they still lead to high \textit{communication overhead} due to their intrinsic nature, e.g., 2$\times$-42$\times$ inflation of traffic in our characterization (\cref{sec:background_limit}).

Batch encryption is also \textit{incompatible} with sparsification techniques \cite{dryden2016communication, aji2017sparse, lin2017deep, wangni2018gradient, sattler2019robust} which allow clients to dramatically reduce communication overhead by sending only a \textit{sparse subset} of local states to the server. For example, after computing its local gradient, a client may select its largest \textit{s}\% components in magnitude to communicate for some \textit{s}. As selection results vary with clients, upon receiving the sparsified gradients, the server needs to \textit{align} them by coordinating before aggregation. While the alignment can be done without decryption when the selected gradient components are encrypted separately, it becomes \textit{infeasible} when they are packed into one ciphertext. As a result, the communication benefits offered by sparsification are heavily \textit{counteracted} by the message inflation effect of general HEs. For example, using top 10\% sparsification can shrink the network traffic by 10$\times$ in plaintext training. However, when Paillier is also in use, the encrypted sparsified traffic can end up with 20$\times$ larger than the plaintext dense one, as Paillier can inflate the message by around 200$\times$ in the non-batch mode.

To get rid of the heaviness of general HE schemes as well as the limitations of batch optimization, an intuitive idea is to customize a \textit{lightweight} HE scheme for cross-silo FL so that there is no need to do batch optimization. Implementing this intuition raises a fundamental problem: \textit{what makes general HE schemes so heavy?} Our answers are two-fold. \textit{First}, most of them, like Paillier and FV, are \textit{asymmetric} \cite{fun2016survey, papisetty2017homomorphic}, meaning that their implementation relies on trapdoor functions \cite{yao1982theory} that entail expensive computation and prolong messages. \textit{Second}, even for those schemes that are symmetric or have a symmetric version like CKKS, their complexity is still lifted by the requirement of being \textit{versatile}, i.e., allowing multiple types of operations to be performed on ciphertexts.

Fortunately, we found that these two features are \textit{over-tailored} for FL scenarios and can thus be \textit{dispensed}. As for the asymmetric-key design, though it can facilitate outsourced computation \cite{benaloh1989verifiable, kushilevitz1997replication} in a general sense, it does not boost up any security in FL scenarios where it is only the clients who perform encryption and decryption \cite{yang2019federated, kairouz2019advances}. In terms of the versatility, since \textit{weighted average} is the only operation primitive over model updates demanded by prevailing aggregation algorithms \cite{mcmahan2017communication, li2018federated, li2019fair, huang2021personalized}, the HE applied in FL can be free from supporting multiplications.

Based on these observations, in this paper we propose FLASHE, an optimized HE scheme that provably meets the minimum requirements in FL: \textit{semantic security} (that leaks no information on plaintexts) \cite{shafi1982probabilistic} and \textit{additive homomorphism} (that allows additions to be computed on ciphertexts). These properties are achieved by adding \textit{random numbers} to plaintexts where the addition is defined in an additive group. On top of the vanilla scheme, we give \textit{two} variants depending on whether to accommodate sparsification or not. (1) When sparsification is not needed, we envision a chance of further optimizing the computational overhead in performing \textit{double masking} (\cref{sec:vanilla}). (2) When sparsification is employed, we \textit{adaptively} navigate between double masking and single masking for the best performance (\cref{sec:sparse}).

We have integrated FLASHE with FATE, an open-source platform for cross-silo FL \cite{fate} (\cref{sec:implementation}). To demonstrate its efficiency, we evaluate it in a practical environment with 11 AWS EC2 instances in five data centers across three continents (\cref{sec:evaluation}). These clients collaboratively train three models of different sizes: a ResNet-20 \cite{he2016deep} with CIFAR-10 dataset \cite{krizhevsky2009learning}, a GRU \cite{bahdanau2014neural} with Shakespeare dataset \cite{caldas2018leaf}, and a 5-layer CNN \cite{chai2020tifl} with FEMNIST dataset \cite{cohen2017emnist, caldas2018leaf}. Compared to plaintext training, FLASHE only yields $\leq6$\% overhead in training time, while adding no network traffic, which demonstrates its near-optimal efficiency. Compared with the optimized version of Paillier, FV, and CKKS, FLASHE reduces the monetary cost by up to 73\%--94\%. We demonstrate that FLASHE can efficiently secure sparsified model updates, achieving a speedup of 13-63$\times$ and network reduction of 48$\times$, compared to Paillier, the most efficient scheme of general HEs. To our knowledge, FLASHE is the first HE scheme with highly optimized end-to-end performance for cross-silo FL.
\section{Background and Motivation}
\label{sec:background}

We start with a quick primer on cross-silo FL and its privacy concerns (\ref{sec:background_cross}), followed by the inefficiencies of homomorphic encryption, a \textit{de facto} privacy-preserving technique in cross-silo FL (\ref{sec:background_he}). Next, we highlight the key shortcomings of the state-of-the-art optimization that motivate our work (\ref{sec:background_limit}).

\subsection{Cross-Silo Federated Learning}
\label{sec:background_cross}

Federated learning (FL) \cite{mcmahan2017communication} allows \textit{clients} to lock private data in local storage while building a shared model. This is achieved by introducing a central \textit{server} which iteratively collects \textit{local model updates} from selected clients and broadcasts a \textit{global model update} which is an aggregate of the local ones. Depending on the target domain, FL can be classified into two categories \cite{kairouz2019advances}: cross-device FL where participants are a mass of less capable mobile or IoT devices \cite{chen-etal-2019-federated, bonawitz2019towards}, and cross-silo FL where participants are typically 2--100 organizational entities \cite{flmedical, courtiol2019deep, webankfinancial}. We focus on cross-silo FL.

Cross-silo FL seeks strong guarantees on \textit{privacy} as the data in use are commercially valuable or under the protection of governmental legislations \cite{shastri13understanding, shastri2019seven}. Intuitively, the information contained in model updates is narrowly scoped compared to the underlying training data. Nevertheless, it has been recently demonstrated that \textit{exploratory attacks} including property inference \cite{melis2019exploiting}, membership inference \cite{song2019auditing, shokri2017membership} and data reconstruction \cite{wang2019beyond, lam2021gradient} can be made possible even only with \textit{global} model updates. As such, to respect privacy, both local model updates and global ones should be \textit{exclusively released} to those participating organizations--no external party, including the server, should have access to them. In industrial practice like FATE \cite{fate}, homomorphic encryption is a \textit{de facto} technique to secure cross-silo FL.

\subsection{Homomorphic Encryption}
\label{sec:background_he}
Homomorphic encryption (HE) is one kind of encryption schemes that allows a third party (e.g., cloud platform or service provider) to compute on \textit{encrypted} data, without requiring access to the secret key or the plaintexts \cite{rivest1978data}. To be exact, an \textit{additively} homomorphic scheme allows some operation to be directly performed on the ciphertexts $E(m_1)$ and $E(m_2)$, such that the result of that operation corresponds to a new ciphertext whose decryption yields the sum of the plaintext $m_1$ and $m_2$.



Among variants of all kinds, the Paillier \cite{paillier1999public}, FV \cite{fan2012somewhat} and CKKS \cite{cheon2017homomorphic} schemes are the most prevalent ones. The Paillier scheme allows additions to be performed on encrypted data, while the FV and CKKS schemes permit both additions and multiplications on ciphertexts. The Paillier and FV schemes take integers as plaintexts, while the CKKS scheme can encrypt real or complex numbers but yields only approximate results. They are supported by many general-purpose cryptographic libraries \cite{paillier1999public, seal, palisade, helib, heaan} for their satisfaction of semantic security \cite{shafi1982probabilistic}, as well as \textit{relatively better} efficiency compared to other alternatives. Table~\ref{tab:he_property} briefly summarizes the three schemes.

\begin{table}[htb]
    \centering
    \caption{An overview on three prevalent HE schemes, where $+$ and $\times$ mean addition and multiplication, respectively.}
    \label{tab:he_property}
    \resizebox{\columnwidth}{!}{%
    \begin{tabular}{lllll}
        \toprule
        Scheme & \begin{tabular}[c]{@{}l@{}}Supported\\ Function\end{tabular} & Plaintext & \begin{tabular}[c]{@{}l@{}}Semantic\\ Security\end{tabular} & Supporting Library\\
        \midrule
        Paillier \cite{paillier1999public} & $+$ & integer & \checkmark & python-paillier \cite{PythonPaillier} \\
        FV \cite{fan2012somewhat} & $+, \times$ & integer & \checkmark & \begin{tabular}[c]{@{}l@{}}SEAL \cite{seal},\\ PALISADE \cite{palisade} \end{tabular} \\
        CKKS \cite{cheon2017homomorphic} & $+, \times$ & \begin{tabular}[c]{@{}l@{}}real/complex\\ numbers \end{tabular} & \checkmark & \begin{tabular}[c]{@{}l@{}}SEAL, PALISADE,\\  HElib \cite{helib} , HEAAN \cite{heaan} \end{tabular} \\
        \bottomrule
    \end{tabular}%
    }
\end{table}

\PHM{Performance overhead \nopunct} Yet, these three schemes are not best suited to cross-silo FL environment due to their substantial overhead in the \textit{absolute} sense. To illustrate, we benchmark the ciphertext size, encryption time, decryption time, and addition time (that corresponds to performing addition evaluation over 10 ciphertexts) of the three schemes, with various plaintext sizes (we defer the detailed settings of their security parameters in Section \ref{sec:implementation_baseline}). As Paillier and FV cannot take FPNs as inputs, we feed them with quantized numbers, each of which is a 16-bit integer (we elaborate on the quantization details in Section \ref{sec:implementation_encode}). Table~\ref{tab:he_performance} (grey rows) reports the results observed on a \texttt{c5.4xlarge} Amazon EC2 instance (16 vCPUs and 32 GB memory), wherein we make three main findings as follows.

\begin{table}[tb]
    \centering
    \caption{A comparison of \textbf{Pai}llier, \textbf{FV}, \textbf{CKKS}, \textbf{FLASHE} and the \textbf{bat}ching versions of the previous three given 16k, 64k and 256k \textbf{float}ing-point numbers (CKKS) or quantized 16-bit \textbf{int}egers (Pai, FV and  FLASHE) as plaintexts, respectively.}
    \label{tab:he_performance}
    \resizebox{\columnwidth}{!}{%
    \begin{tabular}{cccccc}
    \toprule
    \begin{tabular}[c]{@{}c@{}}\# Numbers\\(Plaintext)\end{tabular} &
    \begin{tabular}[c]{@{}c@{}}HE\\Method\end{tabular} &
    \begin{tabular}[c]{@{}c@{}}Ciphertext\\Size\end{tabular} &
    \begin{tabular}[c]{@{}c@{}}Encryption\\Time(s)\end{tabular} &
    \begin{tabular}[c]{@{}c@{}}Decryption\\Time(s)\end{tabular} &
    \begin{tabular}[c]{@{}c@{}}Addition\\Time(s)\end{tabular} \\
    \midrule
    \multirow{7}{*}{\shortstack[c]{16384 \\(40.02KB)}} & \cellcolor[HTML]{E0E0E0} Pai & \cellcolor[HTML]{E0E0E0} 8.00MB & \cellcolor[HTML]{E0E0E0} 20.02 & \cellcolor[HTML]{E0E0E0} 11.52 & \cellcolor[HTML]{E0E0E0} 5.41 \\
     & Pai+bat & 96.51KB & 0.46 & 0.37 & 0.06 \\
     & \cellcolor[HTML]{E0E0E0} FV & \cellcolor[HTML]{E0E0E0} 513.09MB & \cellcolor[HTML]{E0E0E0} 30.18 & \cellcolor[HTML]{E0E0E0} 29.23 & \cellcolor[HTML]{E0E0E0} 6.50 \\
     & FV+bat & 1.00MB & 1.15 & 1.14 & 0.01 \\
     & \cellcolor[HTML]{E0E0E0} CKKS & \cellcolor[HTML]{E0E0E0} 6.60GB & \cellcolor[HTML]{E0E0E0} 66.74 & \cellcolor[HTML]{E0E0E0} 45.93 & \cellcolor[HTML]{E0E0E0} 187.28 \\
     & CKKS+bat & 1.65MB & \textbf{0.02} & \textbf{0.01} & 0.05 \\
     & \cellcolor[HTML]{FFFFBF}FLASHE & \cellcolor[HTML]{FFFFBF}\textbf{40.02KB} & \cellcolor[HTML]{FFFFBF}0.17 & \cellcolor[HTML]{FFFFBF}0.18 & \cellcolor[HTML]{FFFFBF}\textbf{0.01} \\
     \midrule
     \multirow{7}{*}{\shortstack[c]{65536\\(160.02KB)}} & \cellcolor[HTML]{E0E0E0} Pai & \cellcolor[HTML]{E0E0E0} 32.00MB & \cellcolor[HTML]{E0E0E0} 79.48 & \cellcolor[HTML]{E0E0E0} 45.78 & \cellcolor[HTML]{E0E0E0} 21.66 \\
     & Pai+bat & 386.02KB & 1.17 & 0.78 & 0.25 \\
     & \cellcolor[HTML]{E0E0E0} FV & \cellcolor[HTML]{E0E0E0} 2.00GB & \cellcolor[HTML]{E0E0E0} 122.67 & \cellcolor[HTML]{E0E0E0} 114.54 & \cellcolor[HTML]{E0E0E0} OOM \\
     & FV+bat & 4.00MB & 1.14 & 1.15 & 0.04 \\
     & \cellcolor[HTML]{E0E0E0} CKKS & \multicolumn{4}{c}{\cellcolor[HTML]{E0E0E0} Out Of Memory} \\
     & CKKS+bat & 6.60MB & \textbf{0.07} & \textbf{0.05} & 0.20 \\
     & \cellcolor[HTML]{FFFFBF}FLASHE & \cellcolor[HTML]{FFFFBF}\textbf{160.02KB} & \cellcolor[HTML]{FFFFBF}0.18 & \cellcolor[HTML]{FFFFBF}0.18 & \cellcolor[HTML]{FFFFBF}\textbf{0.02} \\
     \midrule
     \multirow{7}{*}{\shortstack[c]{262144\\(640.02KB)}} & \cellcolor[HTML]{E0E0E0} Pai & \cellcolor[HTML]{E0E0E0} 128.00MB & \cellcolor[HTML]{E0E0E0} 317.62 & \cellcolor[HTML]{E0E0E0} 182.58 & \cellcolor[HTML]{E0E0E0} 86.58 \\
     & Pai+bat & 1.51MB & 4.02 & 2.44 & 1.01 \\
     & \cellcolor[HTML]{E0E0E0} FV & \cellcolor[HTML]{E0E0E0} 8.02GB & \cellcolor[HTML]{E0E0E0} 479.22 & \cellcolor[HTML]{E0E0E0} 450.18 & \cellcolor[HTML]{E0E0E0} OOM \\
     & FV+bat & 16.00MB & 1.66 & 1.68 & 0.18 \\
     & \cellcolor[HTML]{E0E0E0} CKKS & \multicolumn{4}{c}{\cellcolor[HTML]{E0E0E0} Out Of Memory} \\
     & CKKS+bat & 26.40MB & 0.29 & \textbf{0.20} & 0.87 \\
     & \cellcolor[HTML]{FFFFBF}FLASHE & \cellcolor[HTML]{FFFFBF}\textbf{640.02KB} & \cellcolor[HTML]{FFFFBF}\textbf{0.20} & \cellcolor[HTML]{FFFFBF}\textbf{0.20} & \cellcolor[HTML]{FFFFBF}\textbf{0.11} \\
     \bottomrule
    \end{tabular}
    }
\end{table}


\begin{enumerate}
    \item \textit{Communication}: Paillier, FV, and CKKS consistently inflate the message size by 0.2k$\times$, 13k$\times$, and 173k$\times$, respectively, indicating prohibitively high communication overhead in real-world deployment, especially under geo-distributed settings.
    \item \textit{Computation}: The runtime overhead of cryptographical operations is also unaffordable. For example, it charges Paillier, FV, and CKKS 20--66 seconds in encrypting 16k numbers, respectively. Projected to the overhead of encrypting a CNN with 1.20M parameters, these latencies become 26--84 minutes, respectively. In contrast, it merely takes 13 seconds per iteration to train this model in our FL testbed without encryption. We elaborate on the network structure and cluster settings in Section \ref{sec:evaluation_methodology}.
    \item \textit{Feasibility}: The memory usage of FV is so salient that it is infeasible to evaluate the sum of 10 encrypted vectors when the length of each vector exceeds 100k. CKKS also exhibits exhaustive space usage of even a higher magnitude.
\end{enumerate}

In brief, the enormous performance overhead of general HE schemes raises significant efficiency and scalability concerns in the industrial deployment of cross-silo FL.

\subsection{Limitations of State-Of-the-Art Optimization}
\label{sec:background_limit} 

So far, \textit{batch encryption} is the state-of-the-art optimization direction for accelerating general HE schemes. Basically, it allows both en-/decryption and homomorphic evaluation to be performed on multiple plaintexts in a data-parallel manner. As for Paillier, this can be achieved by concatenating multiple plaintexts into a large-yet-legitimate one \cite{liu2019secure, zhang2020batchcrypt}. In terms of FV and CKKS, more advanced methods such as CRT-based encoding technique can be applied \cite{smart2010fully, smart2014fully}. For each of the three schemes, we compare the \textit{fully} batched version against the plain version. As shown in Table~\ref{tab:he_performance} (white rows), batch encryption has significantly reduced the \textit{computational cost} for all three schemes.

\PHM{Substantial message inflation} Still, the \textit{message inflation factor} remains suboptimal even with the aid of batch encryption. As shown in Table~\ref{tab:he_performance} (white rows), compared to the plaintext size, the ciphertext size in Paillier, FV, and CKKS are \textit{expanded} by $2.4\times$--$42.2\times$, respectively. These observations also comply with the observations in the literature \cite{damgaard2010generalization, fauzi2012fully, chohybrid}. Such a degree of message inflation can cause salient \textit{communication overhead} in cross-silo FL, where participating organizations typically do not share a data center and have to communicate atop \textit{wide-area networks} (WANs). To briefly quantify the shortage of WAN bandwidth, we measure the network bandwidth between all pairs of five sites in different regions (London, Tokyo, Ohio, N. California, and Sydney). For each ordered pair of regions $(A, B)$, we use \texttt{iperf3} \cite{iperf3} to time packet sending from $A$ to $B$ for five rounds and estimate the average bandwidth.

\begin{figure}[tb]
    \centering
    \includegraphics[width=\columnwidth]{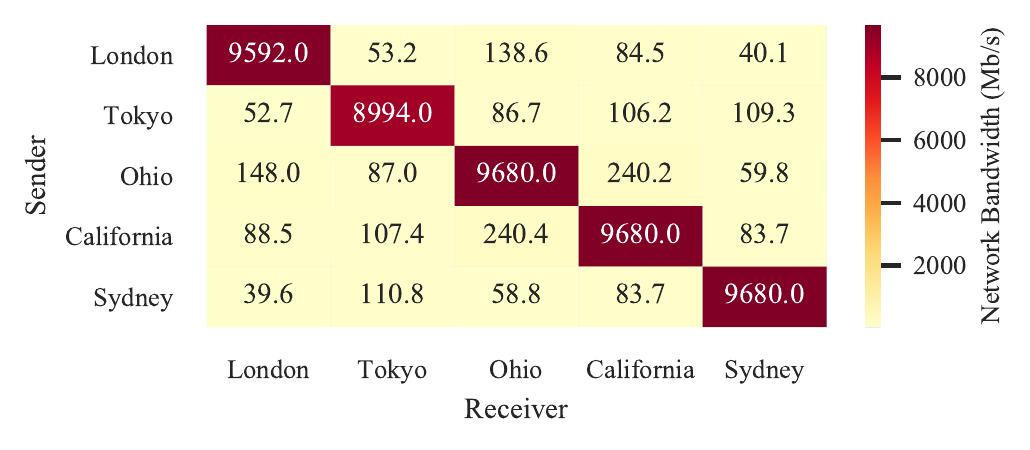}
    \caption{Network bandwidth between 5 Amazon EC2 sites.}
    \label{fig:bandwidth}
    \vspace{-4mm}
\end{figure}

Figure~\ref{fig:bandwidth} depicts the results, wherein we can observe that (1) WAN bandwidth is \textit{much smaller} than LAN bandwidth (120$\times$ smaller on average), and that (2) WAN bandwidth \textit{varies significantly} across different sites. As cross-silo FL is typically conducted in a \textit{synchronous} fashion, even if we have most of the clients located in the same high-speed LAN, the end-to-end performance of the system still heavily depends on the \textit{tail} WAN bandwidth (e.g., 40Mb/s in our measurement). To exemplify, for encrypting a model of medium size with 40M 16-bit integers, even a message expansion factor as small as $2.4\times$ (as in fully optimized Paillier) can translate into a substantial per-round communication overhead of 45s. We thereby conclude that the geo-distributed nature of cross-silo settings urges further mitigation in HEs' message inflation effects.

\PHM{Incompatibility with sparsification} For collaborative learning, sparsification \cite{dryden2016communication, aji2017sparse, lin2017deep, wangni2018gradient, sattler2019robust} is commonly used in practice for communication reduction. It works by allowing each distributed client to transmit a \textit{sparse} representation of its model update, which can be obtained by performing element-wise multiplication on the original update with some 0/1 \textit{mask} of the same length. The state-of-the-art variation is \textit{top s\% sparsification}, where each client agrees on some value $s$ and \textit{independently} constructs its own mask by setting 1 in the coordinates that rank top \textit{s}\% in absolute value and 0 otherwise. We formally describe  this method in Algorithm~\ref{algo:tops}. It has been shown that top \textit{s}\% method can reduce the amount of communication per step by up to \textit{three orders of magnitude}, while still preserving \textit{model quality} \cite{lin2017deep,aji2017sparse}.

Unfortunately, top $s$\% sparsification is not compatible with batch encryption. As the coordinates that each client masks are \textit{inconsistent}, their sparsified model updates cannot be directly summed up. Instead, the server needs to first perform \textit{coordinate alignment} on them, which can still be achieved when their values are encrypted separately. When it comes to batch encryption, however, this becomes \textit{infeasible} as values of different coordinates are packed and encrypted in a way that they are not distinguishable unless decrypted. Figure~\ref{fig:sparsification} further illustrates this issue.

As a result, we cannot \textit{simultaneously} enjoy the performance merits of sparsification and batch encryption. For example, if a client employs top 10\% sparsification in order to achieve a target compression ratio of 10, she actually ends up with 20$\times$ message inflation after her sparsified model update is encrypted by vanilla Paillier which expands a plaintext by 200$\times$ (\cref{sec:background_he}). We consistently observe such a \textit{counteract effect} during empirical evaluation as in Section~\ref{sec:evaluation_sparsification}. In short, as long as we need to protect the confidentiality of model updates and the underlying data with HE, the performance benefits of sparsification will be counteracted, while batch encryption is by no means a relief.

\begin{figure}[tb]
    \centering
    \begin{subfigure}[b]{0.495\columnwidth}
      \centering
      \includegraphics[width=\columnwidth]{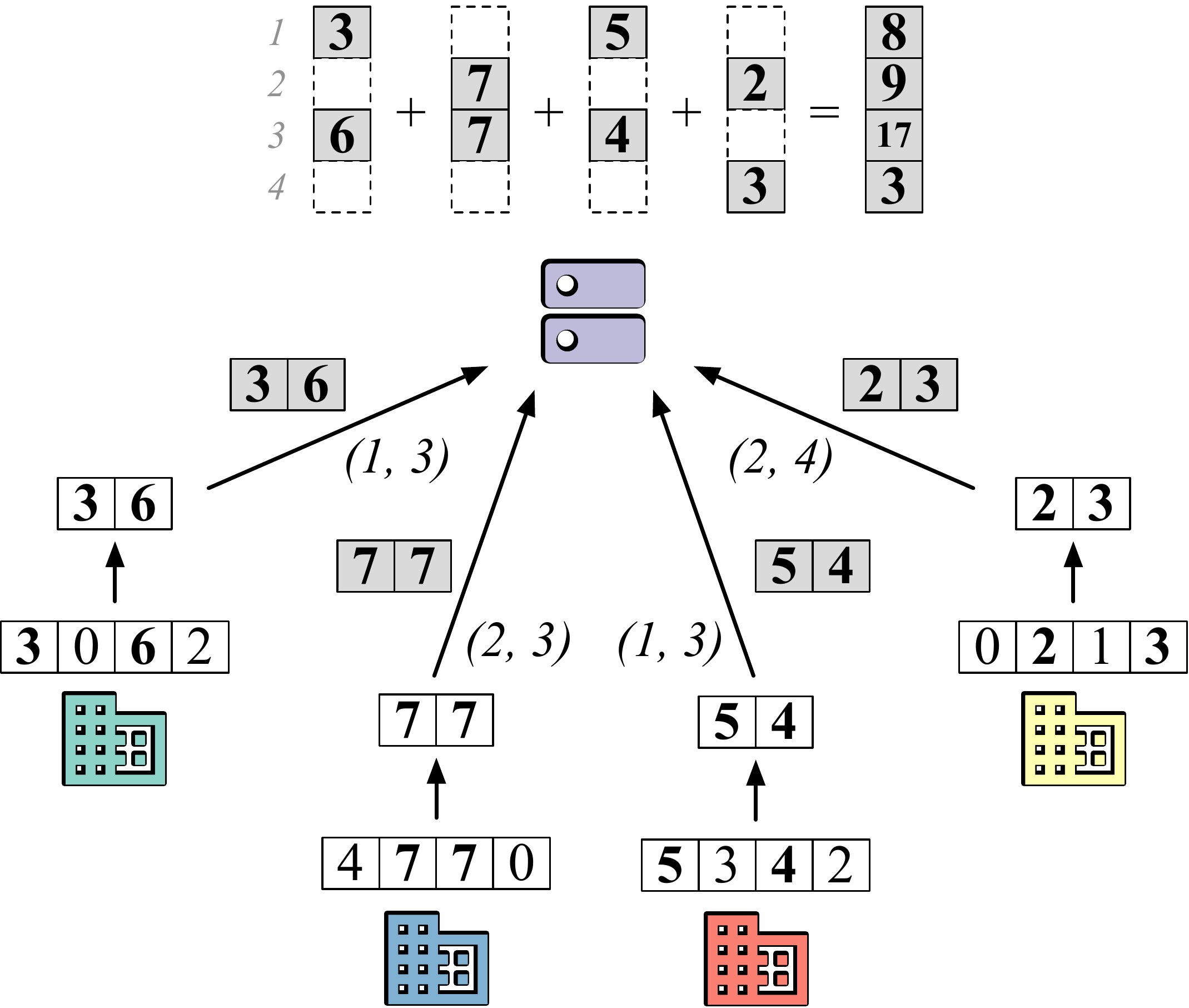}
      \caption{With vanilla HE.}
      \label{fig:sparsification-no}
  \end{subfigure} \hfill
  \begin{subfigure}[b]{0.495\columnwidth}
      \centering
      \includegraphics[width=\columnwidth]{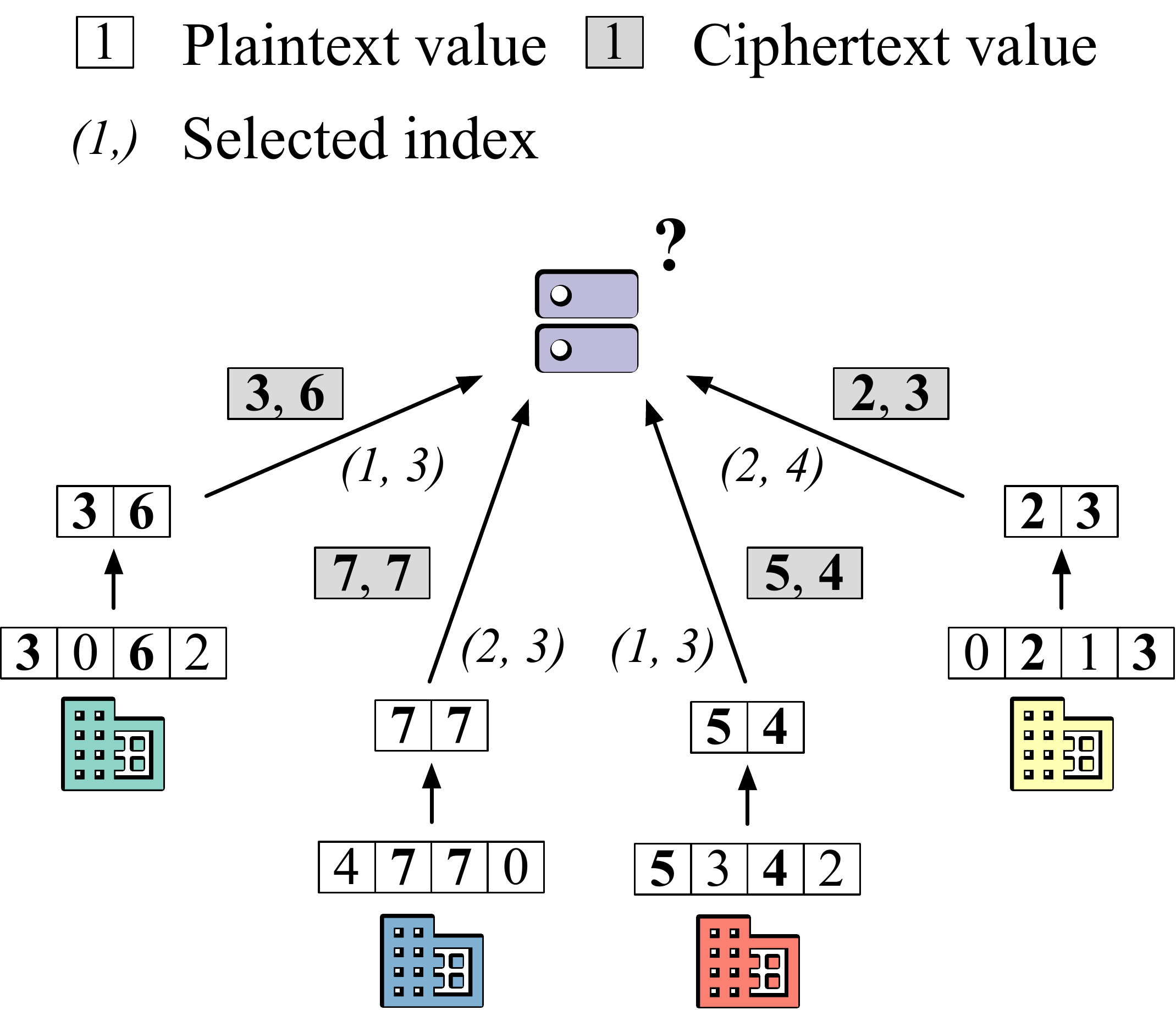}
      \caption{With batch version of HE.}
      \label{fig:sparsification-batch}
  \end{subfigure}
  \caption{The feasibility of top \textit{K} sparsification in FL.}
    \label{fig:sparsification}
    \vspace{-4mm}
\end{figure}

\PHM{Summary} To our best knowledge, we are the \textit{first} to present a comprehensive characterization for the limitations of batch encryption in realistic settings. Given the need for improving communication efficiency and accommodating sparsification techniques, we are motivated to optimize the use of HE in cross-silo FL by proposing a new one.


\section{Vanilla FLASHE}
\label{sec:vanilla}

We aspire to devise a \textit{lightweight} HE for cross-silo FL that \textit{dispenses} the need for batch encryption. To that end, we first crystalize our used threat model (\ref{sec:vanilla_threat}). We then reason about what leads to the heaviness of general HE schemes and whether we can unload them in cross-silo FL (\ref{sec:vanilla_intuition}). After that, we provide a vanilla version of FLASHE, which is optimized when sparsification is not expected (\ref{sec:vanilla_base}), along with in-depth cost and security analysis (\ref{sec:vanilla_cost}, \ref{sec:vanilla_security}, and \ref{sec:vanilla_dns}).

\subsection{Threat Model}
\label{sec:vanilla_threat}

\PHB{Client} All the clients are \textit{honest-but-curious}: they follow any agreed-upon protocol without interfering with the federation; however, they might pry on each other's sensitive information for improving their competitiveness in the market by \textit{observing} the messages they received. In this case, we can tolerate the situation that the clients act on their own or collude with each other to form an adversary of size at most $n-2$, where $n$ is the total number of participating clients.

Some clients may \textit{eavesdrop} on the communication between the server and other clients for recovering their plaintext model updates. Fortunately, the defense against eavesdropping can be \textit{delegated} to secure protocols (e.g., SSL/TLS) in the network layer. Note that the transmitted HE ciphertexts, though encrypted with the same key, will undergo an \textit{additional} encryption process with \textit{different} keys across clients as long as each server-client pair establishes its SSL/TLS channel using different keys.

\PHM{Server} The server also follows the training procedure in an \textit{honest-but-curious} manner but does \textit{not} collude with any single client. Usually, as a celebrated FL service provider, it makes profits from selling its federation algorithms and platforms sustainably. It thus has no incentive to collude with clients and risk devastating its reputation.

\PHM{Practicality} This threat model is \textit{consistent} with those used by other HE alternatives, i.e., they \textbf{do not possess more security benefits} than FLASHE does. To date, WeBank adopts the same model and uses Paillier as a primary means of privacy preservation in FATE \cite{fate}, a production cross-silo FL platform. There are also many academic FL proposals using HEs under the identical threat model \cite{aono2017privacy, liu2018secure, cheng2019secureboost,zhang2020batchcrypt}.

\subsection{Technical Intuition}
\label{sec:vanilla_intuition}

Tracing the source of the inefficiency in general HEs, we identified two major points that can be improved upon.

\PHM{Asymmetricity} Research on HEs is dominated by \textit{asymmetric} schemes (that use different keys for encryption and decryption) due to their wide support in general deployment \cite{fun2016survey, papisetty2017homomorphic}. In cross-silo FL, however, asymmetricity is \textit{redundant} as the server does not incorporate its data. What is worse, the implementation of asymmetric schemes relies on trapdoor functions \cite{yao1982theory} which are \textit{computationally expensive}. We are thus motivated to use symmetric schemes which are way lighter as they only entail perturbance of bits.

\PHM{Versatility} Standardized HE schemes are homomorphic to both \textit{additions} and \textit{multiplications} \cite{albrecht2019homomorphic}. Yet, such a tantalizing power comes at a huge cost of both \textit{time} and \textit{space} brought by advanced mathematic operations and enlarged ciphertext space. Fortunately, multiplications are only useful when we have to calculate the product of two protected values from different parties, which is not the case in prevailing model aggregation methods (e.g., FedAvg \cite{mcmahan2017communication}, FedProx \cite{li2018federated}, \textit{q}-FedAvg \cite{li2019fair}, and FedYOGI \cite{reddi2020adaptive}). In fact, multiplications in FL aggregation are almost always conducted with at least one insensitive operand, and can be done locally with the support for \textit{additive homomorphism} only (e.g., computing a weighted update\footnote{To see that, suppose that a group of users want to delegate the evaluation of $\sum_i a_i b_i / \sum_i a_i$ to a third party, where $a_i$ and $b_i$ are sensitive data possessed by user $i$. With an additively HE, they can first jointly evaluate $\sum_i a_i b_i$ and $\sum_i a_i$, and obtain the final result by local division.}). We are thus safe to trade the support for multiplications for simplicity.

\PHM{Summary} We hence conclude that the \textit{minimum} requirements that an HE needs to meet for cross-silo FL are:

\begin{enumerate}
    \item \textit{Semantic Security} \cite{shafi1982probabilistic}: Knowing only ciphertexts, it must be infeasible for a computationally-bounded adversary to derive significant information about the plaintexts. In cross-silo FL, it basically captures the requirement that \textit{nothing} can be learned about the clients from their encrypted model updates.
    \item \textit{Additive Homomorphism}: There should exist an operator on ciphertexts that supports the evaluation of summation over the associated plaintexts, which we have formulated in Section~\ref{sec:background_he}. This enables the support for a wide variety of aggregation algorithms.
\end{enumerate}

The adherence to semantic security implies that an encryption scheme must be \textit{probabilistic} or \textit{stateful}. Otherwise, an adversary can tell if the same message was sent twice, contradicting the security notion. On top of that, a natural way to incorporate additive homomorphism is to have an integer encrypted by adding to it a \textit{random mask} atop an additive group.

\subsection{The Base Cryptosystem}
\label{sec:vanilla_base}

\PHB{Definition} The encryption scheme is stateful and must pick an ordered couple $(i, j)$ for each ciphertext created, where $i$ and $j$ are the indices of the \textit{federation round} and \textit{client} that are associated with the encryption process, respectively. Given a message $\mathbf{m} \in \mathbb{Z}_n^D$ and a secret key $k$, the encryption result, which is a 3-tuple, is defined as

\begin{equation}
    \begin{aligned}
        & E_{k}(\mathbf{m}) = (\mathbf{c}, i, \{j\}), \\
        s.t. \quad & c_d = (m_d + F_k(i\mid \mid j \mid \mid d) - F_k(i \mid \mid (j+1) \mid \mid d)), \\
        & \text{for} \; 1 \leq d \leq D,
    \end{aligned}
    \label{eq:vector_enc}
\end{equation} where addition is \textit{modular} arithmetic with modulus being $n$, $F_k: I \rightarrow \mathbb{Z}_n$ is a \textit{pseudorandom function} (PRF) that maps the identifier $i$ in $I$ to a value in $\mathbb{Z}_n$ in a truly random manner, and $\mid \mid$ denotes concatenation. Intuitively, the encryption is achieved by adding two \textit{random masks} to the plaintext.

\PHB{Additive Homomorphism} Referring to the ciphertext as $(\mathbf{c}, i, S)$, we define the operation $\oplus$ for performing homomorphic addition over two ciphertexts:

\begin{equation}
    (\mathbf{c}_1, i, S_1) \oplus (\mathbf{c}_2, i, S_2) = (\mathbf{c}_1 + \mathbf{c}_2, i, S_1 \cup S_2).
\end{equation} In other words, we simply perform element-wise modular addition on $\mathbf{c}_1$ and $\mathbf{c}_2$, as well as combining the two multisets $S_1$ and $S_2$. The decryption process is defined as

\begin{equation}
    \begin{aligned}
        & D_k((\mathbf{c}, i, S)) = \mathbf{m}, \\
        s.t. \quad & m_d = (c_d + \sum_{j \in S} (F_k(i\mid \mid (j+1)\mid \mid d) - F_k(i\mid \mid j\mid \mid d))), \\
        & \text{for} \; 1 \leq d \leq D.
    \end{aligned}
    \label{eq:decrypt}
\end{equation} Hence, after homomorphic operations, we have that

\begin{equation}
    D_k(E_k(\mathbf{m}_1) \oplus E_k(\mathbf{m}_2)) = \mathbf{m}_1 + \mathbf{m}_2.
\label{eq:decryption}
\end{equation}

\subsection{Cost Analysis}
\label{sec:vanilla_cost}

Denoting by $N$ the total number of clients and by $D$ the length of a plaintext vector, we summarize the cost of vanilla FLASHE as in Table~\ref{tab:flashe_cost}. Essentially, the communication and addition cost is \textit{exactly the same} as those in the case when data is sending in the clear. As for encryption, as the time and space complexity of computing a PRF is constant, the computation and storage cost for each client performing encryption is only proportional to the message size, which is the \textit{minimum asymptotic cost} no matter what homomorphic encryption scheme is chosen.

\begin{table}[tb]
    \caption{Cost summary for vanilla FLASHE. Green (yellow) cells indicate optimality in the absolute (asymptotic) sense.}
    \label{tab:flashe_cost}
    \centering
    \small
    \begin{tabular}{lccc}
        \toprule
        & \multicolumn{2}{c}{Client} & Server \\
        & Encryption & Decryption & Addition \\
        \midrule
        Computation & \cellcolor[HTML]{FFFFBF} O(D) & O(ND) & \cellcolor[HTML]{e6f5c9} O(ND) \\
        Storage & \cellcolor[HTML]{FFFFBF} O(D) & \cellcolor[HTML]{FFFFBF} O(D) & \cellcolor[HTML]{e6f5c9} O(ND) \\
        \midrule
        Comunication & \multicolumn{3}{c}{\cellcolor[HTML]{e6f5c9} O(D)} \\
        \bottomrule
    \end{tabular}
\end{table}

The only potential inefficiency in vanilla FLASHE stems from its \textit{decryption}. By definition in Equation~\ref{eq:decrypt}, decrypting a scalar in an encrypted model update requires generating two randomnesses for $O(D)$ times, translating to a $O(ND)$ overhead for both computation and storage in total. While the storage cost can be further \textit{reduced} to an asymptotically optimal one ($O(D)$) by adopting in-place computation, the upper bound for the computation cost is instead a \textit{tight} one: we can end up with the worst-case overhead in case that the ciphertext to decrypt was produced by summing up encrypted messages that come from \textit{every other} client with respect to a total order, e.g., from clients indexed $1, 3, 5, \cdots$. Figure~\ref{fig:mask-worst} further illustrates that case. Fortunately, in cross-silo practice it is common for them to keep online as commercial or organizational participants typically reserve abundant computing power and network connectivity for the federation \cite{kairouz2019advances}. In this case, most of the random masks can be early canceled out during aggregation, leaving only $O(D)$ effort needed in decrypting the sum, as depicted in Figure~\ref{fig:mask-best}.

\begin{figure}[tb]
    \centering
    \begin{subfigure}[b]{0.47\columnwidth}
      \centering
      \includegraphics[width=\columnwidth]{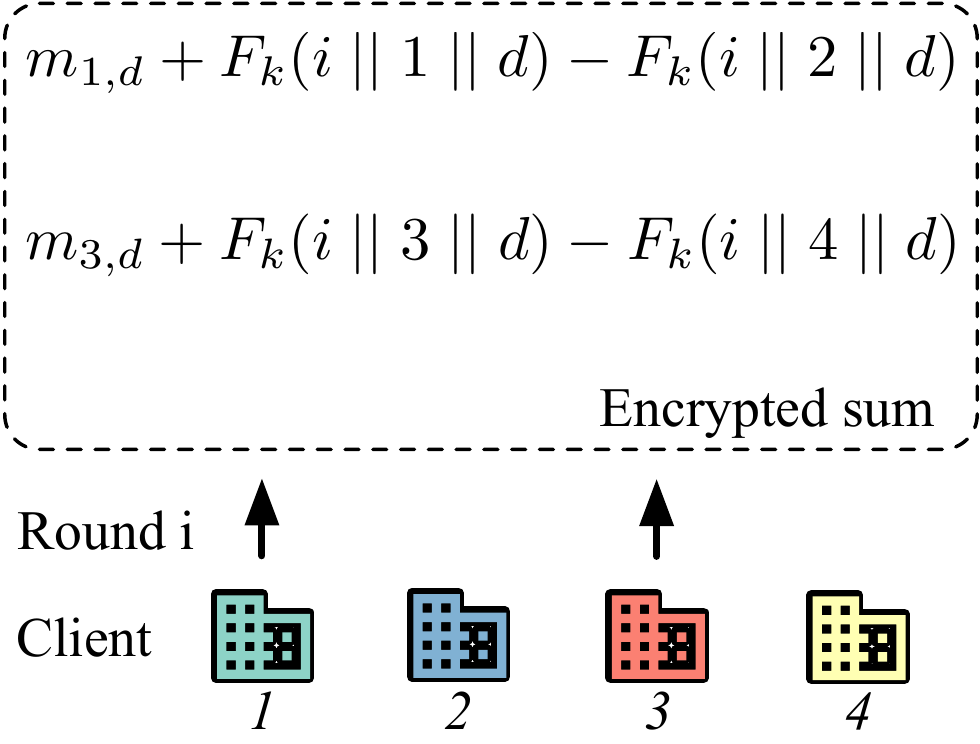}
      \caption{The worst case.}
      \label{fig:mask-worst}
  \end{subfigure}
  \begin{subfigure}[b]{0.47\columnwidth}
      \centering
      \includegraphics[width=\columnwidth]{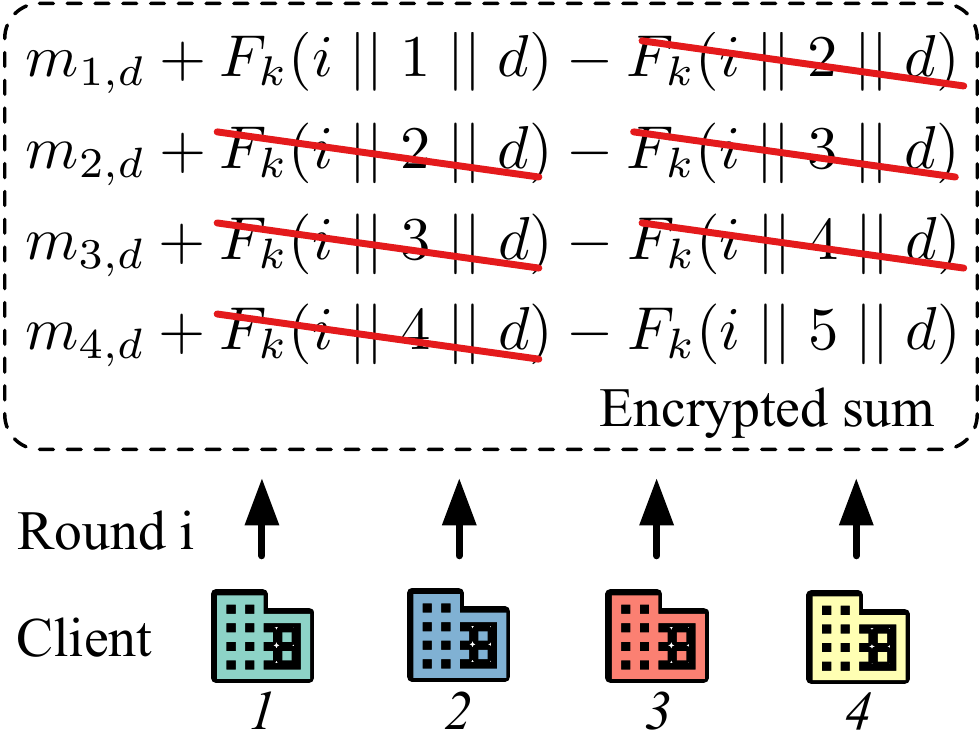}
      \caption{The best case.}
      \label{fig:mask-best}
  \end{subfigure}
  \caption{Corner cases of decryption cost in double masking.}
    \label{fig:mask}
\end{figure}

\subsection{Security Analysis}
\label{sec:vanilla_security}

\begin{theorem}
    Vanilla FLASHE has \textbf{indistinguishability under chosen plaintext attacks (IND-CPA)}, which is provably an equivalent notion of semantic security.
    \label{the:security}
\end{theorem}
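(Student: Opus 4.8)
The plan is to bound any PPT adversary's IND-CPA advantage against vanilla FLASHE by the distinguishing advantage for the PRF $F_k$, and then to invoke the classical equivalence between IND-CPA and semantic security to obtain the second half of the statement \cite{shafi1982probabilistic}. I would first pin down the IND-CPA experiment for a \emph{stateful} scheme: the challenger holds the key $k$ (the adversary never holds $k$, matching the position of the honest-but-curious server) together with an internal counter that assigns a \emph{fresh} ordered pair $(i,j)$ to every ciphertext; the adversary $\mathcal{A}$ adaptively queries an encryption oracle, then outputs two equal-length messages $\mathbf{m}_0, \mathbf{m}_1 \in \mathbb{Z}_n^D$, receives $E_k(\mathbf{m}_b)$ under a fresh index $(i^\star, j^\star)$ for a uniform bit $b$, keeps querying, and finally outputs $b'$; the target is $|\Pr[b'=b] - \tfrac12| \le \mathrm{negl}$.

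The first and only game hop replaces $F_k$ by a truly random function $f : I \to \mathbb{Z}_n$ throughout (oracle and challenge alike). A textbook reduction shows that this shifts $\mathcal{A}$'s winning probability by at most $\mathsf{Adv}^{\mathrm{prf}}_{F}$: a PRF distinguisher runs the entire experiment internally, routes every PRF evaluation to its own oracle, and outputs $1$ exactly when $\mathcal{A}$ wins; since $\mathcal{A}$ is polynomial it forces only polynomially many evaluations, so the simulation is efficient.

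It then remains to show that in the idealized game $\Pr[b'=b] = \tfrac12$ exactly, which I would get by proving that the challenge ciphertext $\mathbf{c}^\star$ is uniform on $\mathbb{Z}_n^D$ and independent of $b$ and of $\mathcal{A}$'s whole view. Coordinate-wise, $c^\star_d = m_{b,d} + f(i^\star \mid\mid j^\star \mid\mid d) - f(i^\star \mid\mid (j^\star{+}1) \mid\mid d)$, so it suffices that the mask $\mu_d := f(i^\star \mid\mid j^\star \mid\mid d) - f(i^\star \mid\mid (j^\star{+}1) \mid\mid d)$ is uniform conditioned on everything the adversary has seen. That view consists of affine equations over $\mathbb{Z}_n$ whose randomizing parts all have the shape $f(i \mid\mid j \mid\mid d) - f(i \mid\mid (j{+}1) \mid\mid d)$ with $(i,j)\ne(i^\star,j^\star)$; regarding each distinct PRF input $f(i \mid\mid \cdot \mid\mid d)$ as a vertex, these randomizing parts are edges of (sub)paths on the client-index line $1,2,\dots$ (there is no client $0$), so they span a forest, and $\mu_d$ is one more edge of that same line at round $i^\star$. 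Because a path is acyclic, adding $\mu_d$ never creates a cycle, so $\mu_d$ lies outside the $\mathbb{Z}_n$-span of the observed randomizing parts — equivalently, after removing the challenge edge at least one of its endpoints is a dangling vertex whose PRF value occurs in no other equation — hence $\mu_d$ is a genuine one-time pad: uniform and independent of the view. Chaining the two steps gives $\mathsf{Adv}^{\mathrm{ind\text{-}cpa}}_{\mathrm{FLASHE}}(\mathcal{A}) \le \mathsf{Adv}^{\mathrm{prf}}_{F}$, which is negligible, and the cited equivalence promotes this to semantic security.

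I expect the last argument to be the crux. The two masks in a ciphertext are deliberately \emph{shared} with the two neighboring clients — this is precisely what lets the cross terms telescope during aggregation (Figures~\ref{fig:mask-worst}--\ref{fig:mask-best}) — so one must rule out that an adversary who collects ciphertexts for a long chain of consecutive client indices within round $i^\star$ can solve for $\mu_d$; the acyclicity/dangling-endpoint observation is what forecloses this. It is therefore worth stating the experiment carefully enough (fresh $(i,j)$ per ciphertext, assigned by the challenger, client indices on a finite line with no index $0$) that the argument is airtight; in particular, if the adversary could steer the challenge index onto an already-used pair, statefulness — and with it security — would collapse, so that restriction should be made explicit rather than assumed.
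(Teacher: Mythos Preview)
Your proposal and the paper both reduce IND-CPA security of FLASHE to the pseudorandomness of $F_k$, so the high-level strategy coincides. The paper's argument is terser: it sets up a single three-party game in which a PRF distinguisher $\mathcal{B}$ simulates \emph{one} challenge ciphertext for $\mathcal{A}$ by making two oracle calls to a PRF/random challenger $\mathcal{C}$, and shows directly that $\mathcal{B}$'s distinguishing advantage is $\epsilon/2$ whenever $\mathcal{A}$'s is $\epsilon$. Crucially, the paper never gives $\mathcal{A}$ an encryption oracle, so as written it only establishes one-time indistinguishability rather than full CPA security. Your treatment is more complete on this point: you spell out the stateful IND-CPA experiment with adaptive oracle access and then argue, via the acyclicity/forest observation on PRF inputs, that even after polynomially many encryption queries the challenge mask $\mu_d$ remains a fresh one-time pad in the random-function hybrid. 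That combinatorial step is precisely what is needed to lift the paper's single-ciphertext claim to genuine CPA security in the presence of the deliberate mask sharing between neighbouring client indices, and it is the piece the paper's proof leaves implicit. Your version would therefore serve as a drop-in strengthening of the published argument.
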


\begin{proof}
    We prove it by contradiction. If vanilla FLASHE does not possess IND-CPA, then by definition, there exists an adversary $\mathcal{A}$ that can distinguish the encryption of two vanilla FLASHE plaintexts of her choices with probability $\frac{1}{2} + \epsilon$ where $\epsilon$ is non-negligible.
    
    We then consider a security game among $\mathcal{A}$ and another two parties $\mathcal{B}$ and $\mathcal{C}$, as depicted in Figure~\ref{fig:proof}. First, $\mathcal{A}$ sends $\mathcal{B}$ two messages $m_0$ and $m_1$. $\mathcal{B}$ then randomly chooses $i$, $j$, and $d$ and send them to $\mathcal{C}$. Now $\mathcal{C}$ chooses a bit $x \in \{0, 1\}$ uniformly at random and return $R_x$ to $\mathcal{B}$ where $R_0$ is selected uniformly at random and $R_1$ is $F_k(i \mid \mid j \mid \mid d)$, i.e., the output of the PRF $F_k$ with ($i \mid \mid j \mid \mid d$) as input. Then, $\mathcal{B}$ sends $i$, $j-1$, and $d$ to $\mathcal{C}$ and receives $S_x$ where $S_0$ is again selected uniformly at random and $S_1$ is $F_k(i \mid \mid j+1 \mid \mid d)$. Finally, $\mathcal{B}$ chooses a bit $y \in \{0, 1\}$ uniformly random and returns $m_y + R_x - S_x$ to $\mathcal{A}$.
    
    Now, $\mathcal{A}$ sends $\mathcal{B}$ $y^{'}$ which indicates the guess on $y$. In case that $y = y^{'}$, i.e., $\mathcal{A}$'s guess is correct, $\mathcal{B}$ then tells $\mathcal{C}$ that her guess on $x$ $x^{'} = 1$. Otherwise, she tells $\mathcal{C}$ $x^{'} = 0$. Note that if $x = 0$, $R_x - S_x$ is truly random, and no matter how powerful $\mathcal{A}$ is, she can only guess $y$ correctly with probability exactly $1/2$. In other words, $\mathcal{B}$ can only outputs $x^{'} = 0$ with probability exactly $1/2$. On the other hand, if $x = 1$, then the message $\mathcal{A}$ receives is exactly the ciphertext of $m_0$ or $m_1$ under vanilla FLASHE. Hence, $\mathcal{A}$ can guess $y$ correctly with probability $\frac{1}{2} + \epsilon$, meaning that $\mathcal{B}$ can also tell $x = 1$ with probability $\frac{1}{2} + \epsilon$. To sum up, $\mathcal{B}$ can guess $x$ with probability $\frac{1}{2} + \frac{\epsilon}{2}$ where $\frac{\epsilon}{2}$ is non-negligible.
    
    In other words, with $\mathcal{A}$ that can break the semantic security of vanilla FLASHE, we construct an adversary $\mathcal{B}$ that can distinguish the output of $F_k$ from random with a significant advantage over random guess. This violates the definition of PRFs. Hence, the vanilla FLASHE exhibits IND-CPA.
\end{proof}

\begin{figure}[tb]
    \centering
    \includegraphics[width=0.8\columnwidth]{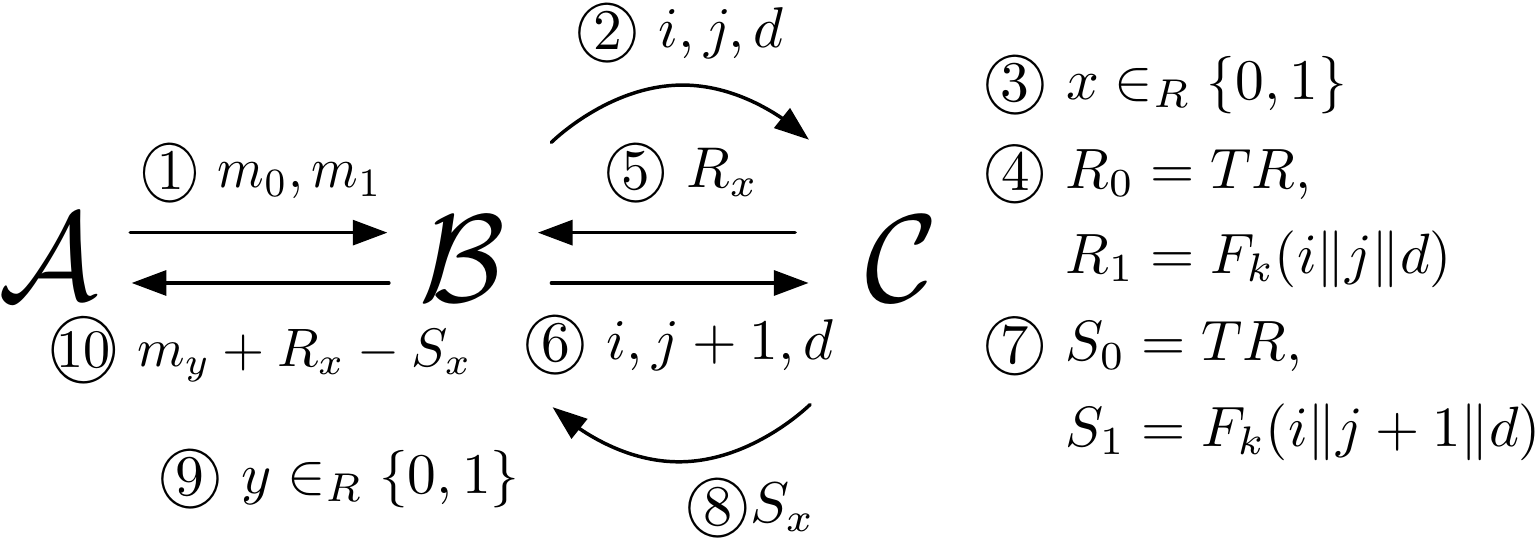}
    \caption{The security game for proving Theorem~\ref{the:security}.}
    \label{fig:proof}
\end{figure}

\subsection{Discussion with Single Masking}
\label{sec:vanilla_dns}

The above proof also implies the semantic security of the single masking scheme that only adds one random mask to encrypt a plaintext\footnote{For brevity, we omit the definition and security proof of single masking as they can be trivially derived from those of double masking.}. Thus, for obtaining the same degree of privacy guarantee, single masking is more efficient in encryption than double masking. On the other hand, we also mention the performance benefit of double masking in decryption brought by mask neutralization (\cref{sec:vanilla_cost}), which does not exist in single masking. To justify the use of double masking in vanilla FLASHE, we provide a \textit{quantitative} comparison of them on computational overhead. In particular, we focus on the cost of mask generation as it \textit{dominates}.

\begin{theorem} Given the total number of clients $N$, the number of scalars in a model update $D$, and dropout rate $d$ in a federation round, the \textbf{expected total number of masks needed to generate} in that round for each surviving client under the double masking scheme and single masking scheme are $2(-Nd^2 + (N-1)d + 2)D$ and $(-Nd+N+1)D$, respectively.
\label{the:comp}
\end{theorem}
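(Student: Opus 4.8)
The plan is to evaluate each of the two quantities as (expected number of PRF evaluations performed during encryption) plus (expected number performed during decryption), for a fixed client $c$ that survives the round, with the expectation taken over the dropout pattern of the other clients. So first fix the model: each client independently drops out with probability $d$, hence the set $S$ of surviving clients is a random subset of $\{1,\dots,N\}$ that contains $c$ and contains every other index independently with probability $1-d$. The encryption term is deterministic: by Equation~\ref{eq:vector_enc}, client $c$ evaluates $F_k$ at exactly the two inputs $(i\|c\|\ell)$ and $(i\|(c+1)\|\ell)$ for each of the $D$ coordinates $\ell$, i.e., $2D$ masks under double masking and $D$ masks under single masking.

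The interesting term is decryption, which depends on the realized $S$. Under single masking the residual mask on coordinate $\ell$ of the aggregate $\bigoplus_{j\in S}E_k(\mathbf{m}_j)$ is $\sum_{j\in S}F_k(i\|j\|\ell)$: all summands are distinct and carry coefficient $+1$, nothing cancels, and the client must generate $|S|$ masks per coordinate, i.e., $|S|\cdot D$ in total. Under double masking the residual is $\sum_{j\in S}\bigl(F_k(i\|(j+1)\|\ell)-F_k(i\|j\|\ell)\bigr)$; collecting terms, the coefficient of $F_k(i\|m\|\ell)$ is nonzero exactly when the membership indicators of $m-1$ and $m$ in $S$ differ (with indices $0$ and $N+1$ treated as absent). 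The key combinatorial observation is that this happens precisely at the first index of each maximal run of consecutive indices in $S$ and at the index just past the last index of each such run, so the residual is supported on exactly $2\rho(S)$ PRF outputs, where $\rho(S)$ is the number of maximal runs; hence double-masking decryption costs $2\rho(S)\cdot D$ masks.

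It then remains to take expectations, using linearity. $E[|S|]$ is a sum of Bernoulli indicators and is immediate. For $\rho(S)$, write $\rho(S)=\sum_m\mathbb{1}[\,m\in S,\;m-1\notin S\,]$; each term is a product of at most two independent Bernoulli indicators (plus boundary terms), so $E[\rho(S)]$ is a degree-two polynomial in $d$. Substituting into (encryption) + (decryption) and simplifying should yield $2(-Nd^2+(N-1)d+2)D$ and $(-Nd+N+1)D$.

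The main obstacle is getting the decryption count exactly right. One must prove the run-counting lemma for the double-masked residual, and---more delicately---settle the edge conventions that fix the lower-order coefficients: the treatment of index $1$ and of the virtual index $N+1$, whether a surviving client may reuse the two masks it already produced while encrypting its own update, and whether the conditioning $c\in S$ is tracked exactly (which perturbs the indicator probabilities at positions $c$ and $c+1$) or is replaced by the unconditional $E[|S|]$ and $E[\rho(S)]$. Pinning down the protocol's precise decryption procedure is what makes the constant and linear terms come out as stated, so I would resolve that before carrying out the otherwise routine algebra.
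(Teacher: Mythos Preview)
The paper explicitly omits the proof of this theorem (``We omit the proof for brevity and visualize the implications in Figure~\ref{fig:dns}''), so there is nothing to compare your proposal against directly. Your decomposition into a deterministic encryption term plus an expected decryption term, together with the run-counting lemma (decryption cost under double masking equals $2\rho(S)\cdot D$ where $\rho(S)$ is the number of maximal runs of surviving indices), is the natural approach and is structurally sound. The indicator-sum computation of $E[\rho(S)]$ via linearity is also the right tool.

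Your closing caveat is, in fact, the heart of the matter and deserves more than a footnote. A quick sanity check shows that the single-masking formula $(-Nd+N+1)D$ matches $D + E[|S|]\cdot D$ exactly when $E[|S|]$ is taken \emph{unconditionally} (i.e., $E[|S|]=N(1-d)$), not conditioned on $c\in S$. By contrast, the double-masking formula $2(-Nd^2+(N-1)d+2)D$ does \emph{not} fall out of either the unconditional computation ($2D+2(1-d)(1+(N-1)d)D$) or the conditional one for a generic interior client; for small cases ($N=3$, $d=\tfrac12$) it matches the conditional expectation only when the surviving client sits at a boundary index. So the two stated formulas appear to rest on slightly different conventions, and you should not expect a single clean probabilistic model to deliver both simultaneously. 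Since the paper's use of the theorem is qualitative (existence and rough location of a crossover point, confirmed empirically in \S\ref{sec:evaluation_dropout}), this discrepancy is harmless for the paper's purposes, but if your goal is to reproduce the exact coefficients you may need to reverse-engineer the intended accounting---or accept that the stated constants may carry a minor slip.
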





We omit the proof for brevity and visualize the implications in Figure~\ref{fig:dns}. With Figure~\ref{fig:dns1}, we first see that there exists a \textit{crossover point} ($C$) of the two schemes' curves. When the dropout rate is lower than that point, double masking needs to generate fewer masks in expectation, and vice versa. Moreover, with Figure~\ref{fig:dns2}, we further know that $C$ typically lies in the range $[0.3, 0.5]$. We hence claim \textit{in theory} that double masking incurs less computational overhead than single masking does as long as there is no severe client dropout during the federation (e.g., $d \leq 0.3$), which is the case for cross-silo FL. We further confirm this claim with \textit{empirical} observations in Section~\ref{sec:evaluation_dropout}.

\begin{figure}[tb]
    \centering
    \begin{subfigure}[b]{0.445\columnwidth}
      \centering
      \includegraphics[width=\columnwidth]{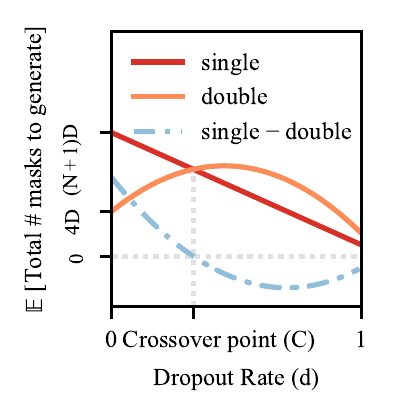}
      \caption{Impacts of dropout rates on mask generation.}
      \label{fig:dns1}
  \end{subfigure}
  \begin{subfigure}[b]{0.44\columnwidth}
      \centering
      \includegraphics[width=\columnwidth]{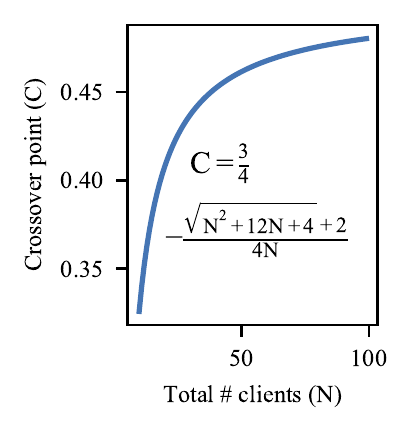}
      \caption{Relationship between the crossover point and FL scale.}
      \label{fig:dns2}
  \end{subfigure}
  \caption{Visualization of Theorem~\ref{the:comp}.}
    \label{fig:dns}
    \vspace{-3mm}
\end{figure}

\section{Tuning with Sparsification}
\label{sec:sparse}

While vanilla FLASHE is fully optimized for sparsification-free scenarios, there is still room for it to improve efficiency (\cref{sec:sparse_performance}) and enhance privacy (\cref{sec:sparse_security}) when sparsification is expected in the cross-silo FL practice. We thus fine-tune vanilla FLASHE for best supporting sparsification (Algorithm~\ref{algo:tops}).

\begin{alg}
    \caption{Top \textit{s}\% sparsification with tuned FLASHE.}
    \label{algo:tops}
    \footnotesize
    \DontPrintSemicolon 
    \SetKwInOut{Input}{Input}
    \SetKwInOut{Output}{Output}
    \SetKwInOut{Server}{Server}
    \SetKwInOut{Client}{Client}
    \SetKwProg{Fn}{Function}{}{}
    \newcommand\mycommfont[1]{\footnotesize\rmfamily\textcolor{ACMGreen}{#1}}
    \SetCommentSty{mycommfont}
    \SetNoFillComment
    \SetAlgoNoLine

    \SetKwFunction{ServerCoordinate}{ServerCoordinate}
    \SetKwFunction{ClientTrainIter}{ClientTrainIter}

    \Fn{\ServerCoordinate{}}{
        \For{$i = 1, 2, \dots$}{
            Issue \ClientTrainIter{$i$} at each client $j$\;
            Collect $M^{'}_{i, j}$ from each client $j$\;
            \textcolor{ACMPurple}{Determine the masking scheme and notify clients}\;\label{ln:determine_scheme}
            Collect $[[\Delta\mathbf{w}^{'}_{i, j}]]$ from each client $j$\;
            $[[\Delta\mathbf{w}^{'}_{i}]] 
            \leftarrow \sum_j [[\Delta\mathbf{w}^{'}_{i, j}]]$, and
            $M^{'}_{i} \leftarrow \sum_j M^{'}_{i, j}$\;
            Dispatch $[[\Delta\mathbf{w}^{'}_{i}]]$ and $[[M^{'}_{i}]]$ to clients\;
        }
    }
    \BlankLine

    \Fn{\ClientTrainIter{i, j}}{
        \tcc{Local training.}
        $\mathbf{w}_{i, j} = \mathbf{w}_{i-1}$\;
        \For{$e = 1, 2, \dots$}{
            Update $\mathbf{w}_{i, j}$ based on local dataset $\chi$\;
        }
        $\Delta\mathbf{w}_{i, j} = \mathbf{w}_{i, j} - \mathbf{w}_{i-1}$\;
        \BlankLine

        \tcc{Per-layer top $s$\% sparsification with error reserved.}
        $\Delta\mathbf{w}_{i, j} = \Delta\mathbf{w}_{i, j} + \mathbf{r}_{i-1, j}$\;
        \For{$l = 1, \dots, L$}{
            Select threshold: $\theta \leftarrow$ $s$\% largest in $\Delta\mathbf{w}_{i, j}[l]$\;
            Determine Mask: $M_{i, j}[l] \leftarrow \lvert \Delta\mathbf{w}_{i, j}[l] \rvert > \theta$\;
            $\mathbf{r}_{i, j}[l] \leftarrow \Delta\mathbf{w}_{i, j}[l] \odot \neg M_{i, j}[l]$\;
            $\Delta\mathbf{w}_{i, j}[l] \leftarrow \Delta\mathbf{w}_{i, j}[l] \odot M_{i, j}[l]$\;
        }
        \BlankLine
        
        \tcc{Order permutation.}
        \textcolor{ACMPurple}{Permutate $\Delta\mathbf{w}_{i, j}$ and $M_{i, j}$ consistently, obtain $\Delta\mathbf{w}^{'}_{i, j}$ and $M^{'}_{i, j}$}\;\label{ln:permutate}
        \BlankLine

        \tcc{Adaptive masking and synchronization.}
        \textcolor{ACMPurple}{Send $M^{'}_{i, j}$ to the server, and receive decision from it}\;\label{ln:send_mask}
        \textcolor{ACMPurple}{Quantize and encrypt $\Delta\mathbf{w}^{'}_{i, j}$ into $[[\Delta\mathbf{w}^{'}_{i, j}]]$}\;
        Send $[[\Delta\mathbf{w}^{'}_{i, j}]]$ to the server and receive $[[\Delta\mathbf{w}^{'}_{i}]]$ and $\mathbf{M}^{'}_{i}$\;
        \textcolor{ACMPurple}{Decrypt and unquantize $[[\Delta\mathbf{w}^{'}_{i}]]$, obtain $\Delta\mathbf{w}^{'}_{i}$}\;
        \BlankLine

        \tcc{Normalization and order restoration.}
        Normalize $\Delta\mathbf{w}^{'}_{i}$ based on $\mathbf{M}^{'}_{i}$\;
        \textcolor{ACMPurple}{Restore $\Delta\mathbf{w}_{i}$ from $\Delta\mathbf{w}^{'}_{i}$}\;
        $\mathbf{w}^{'}_{i} = \Delta\mathbf{w}^{'}_{i} + \mathbf{w}^{'}_{i-1}$\;
    }
\end{alg}

\subsection{Optimization in Performance}
\label{sec:sparse_performance}

\PHB{Ciphertext Structure Redesign} A sparsified local update will be \textit{compacted}, i.e., with all zero terms discarded, before being encrypted and transmitted for truly reducing communication traffic. In this sense, the \textit{participation record} in vanilla FLASHE's ciphertext (the $S$ in $(\mathbf{c}, i, S)$, \cref{sec:vanilla_base}) should be constructed at a coordinate level, e.g., being a list of multisets $\mathbf{S}$ where $S_d = \{j\}$ for $1 \leq d \leq D$. However, this information is \textit{redundant} as it can be inferred from the sparsification mask which will be uploaded anyway. We hence end up dispensing the participation record in the ciphertext.

\PHM{Adaptive Masking} When top \textit{s}\% sparsification is applied to local updates to uplink communication, the coordinates that each client masks are naturally \textit{different}. In other words, even if all clients keep attending throughout the federation, from the perspective of a specific \textit{coordinate}, we still expect some clients to ``drop out'' of the aggregation, to which degree (1) varies across both coordinates and rounds, and (2) is not predictable without knowledge on clients' data. In this case, double masking is \textit{not necessarily} more efficient in computation than single masking.

Therefore, instead of fixing the use of either masking scheme in FLASHE throughout the entire training, we propose to switch between them \textit{adaptively}. At each round, after determining the \textit{sparsification mask} for its current update, each client sends the mask to the server (Line~\ref{ln:send_mask}). The server then helps clients determine whether to use single masking or double masking by first calculating the number of masks to generate required by both schemes and then selecting the scheme with the smaller outcome (Line~\ref{ln:determine_scheme}). Clearly, the decision made in such a manner is \textit{optimal} and does \textit{not require} any knowledge a priori. Also, the brought overhead is foreseeably negligible: the choice of the encryption scheme on is merely a bit, and the computation performed at the server is trivial.

\subsection{A Note on Security}
\label{sec:sparse_security}

As above mentioned, the target plaintext of FLASHE is the \textit{compacted} version of a sparsfied model. In other words, FLASHE should \textit{not} be responsible for protecting the sparsification mask. However, to minimize information leakage, we here suggest clients agreeing on some \textit{secrets}, based on which they can consistently permutate their model updates before conducting sparsification (Line~\ref{ln:permutate}). This can \textit{obfuscate} the structural information contained in the sparsification masks to upload while having \textit{no} impact on the correctness of model update aggregation that is performed at the server.
\section{Implementation}
\label{sec:implementation}

\subsection{FLASHE Overview}
\label{sec:implementation_overview}

At its core, FLASHE is an end-to-end implementation for privacy-preserving aggregation in cross-silo FL training. It is located in both clients and the server and interacts with the driver of an FL framework. Given developer-specified security parameters, it performs cryptographic operations on model updates, whereas the driver is in charge of performing local training and conducting communication.

Figure~\ref{fig:flashe} shows how FLASHE interacts with FL execution frameworks. \textcircled{\raisebox{-0.7pt}{1}} \textit{Secure Channel Establishment}: without loss of generality, all network traffic is assumed to go through the server. To communicate secrets in the presence of eavesdroppers, each server-client pair need to first establish a secure channel using different keys. \textcircled{\raisebox{-0.7pt}{2}} \textit{Key Generation}: the \textit{coordinator} at the server-side randomly selects a client as the leader whose \textit{initializer} takes charge of generating a key for FLASHE's use (\LARGE\textcircled{\raisebox{0.3pt}{\normalsize 2a}}\normalsize). The generated key is then dispatched to the other clients' intializer (\LARGE\textcircled{\raisebox{0.3pt}{\normalsize 2b}}\normalsize). \textcircled{\raisebox{-0.7pt}{3}} \textit{Local Training}: in an iteration, a client trains the local model with its private data. \textcircled{\raisebox{-0.7pt}{4}} \textit{Encryption}: the model update is then fed to executor, wherein floating-point (FP) numbers is first encoded to integers (\LARGE\textcircled{\raisebox{0.3pt}{\normalsize 4a}}\normalsize) before being encrypted (\LARGE\textcircled{\raisebox{0.3pt}{\normalsize 4b}}\normalsize). \textcircled{\raisebox{-0.7pt}{5}} \textit{Aggregation}: at the server end, the \textit{aggregator} waits until receiving all of the encrypted local updates and then performs homomorphic additions. \textcircled{\raisebox{-0.7pt}{6}} \textit{Decryption}: The global update is then dispatched back to the clients whose executors go through the inverse process of encrypting and encoding. \textcircled{\raisebox{-0.7pt}{7}} \textit{Update Application}: the training backend finally advances the local model accordingly.

We have implemented FLASHE as a \textit{pluggable} module in FATE \cite{fate}, an open-source platform for cross-silo FL, with 3.3k lines of Python code. While we base our implementation on FATE, our \textit{modular design} can be extended easily to other frameworks such as TensorFlow Federated \cite{tff} and PySyft \cite{ryffel2018generic}. Substituting the encryption primitives, we \textit{seamlessly} port Paillier, FV, and CKKS with additional 1.3k LOC.

\begin{figure}[tb]
    \centering
    \includegraphics[width=0.9\columnwidth]{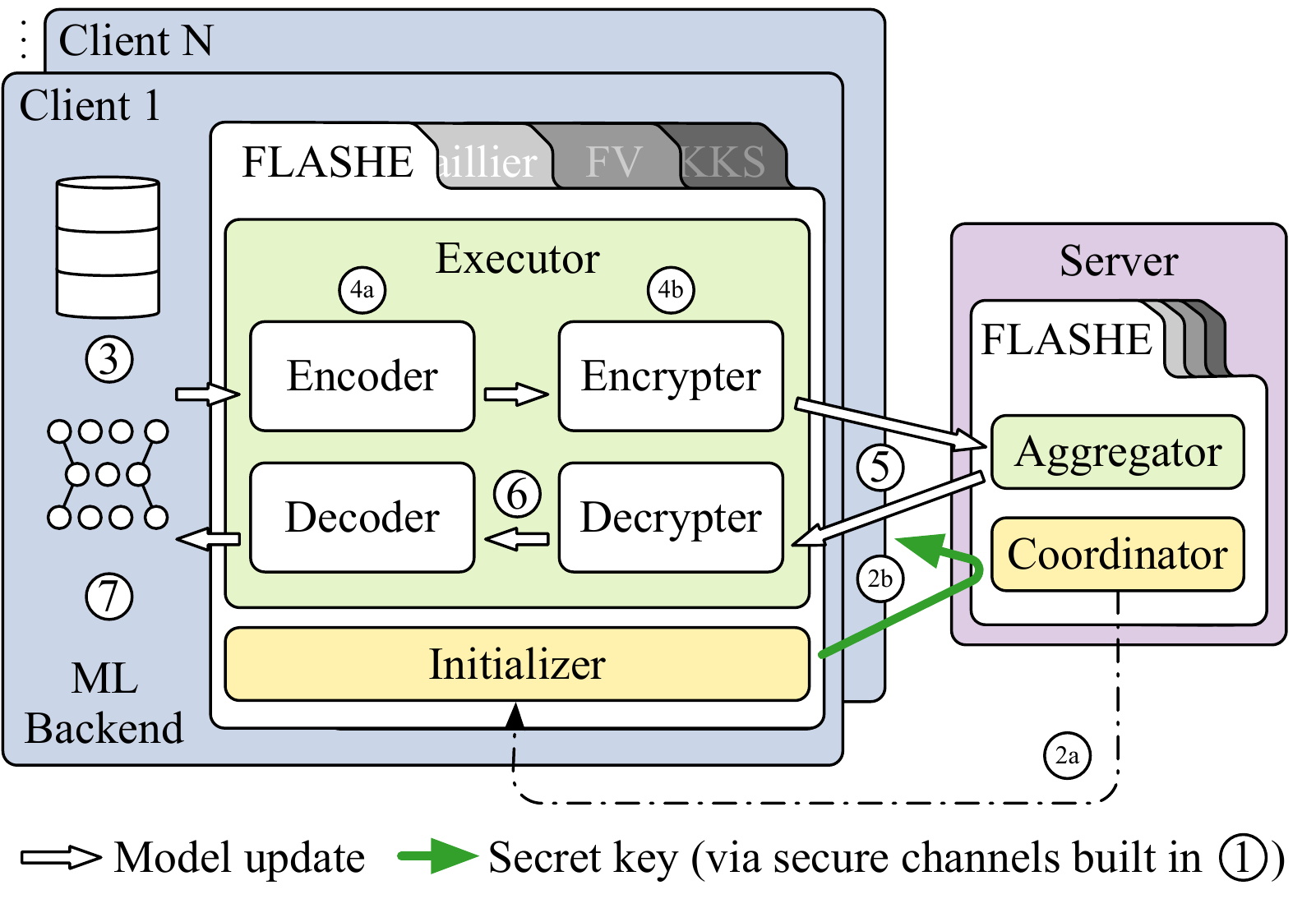}
    \caption{FLASHE architecture.}
    \label{fig:flashe}
    \vspace{-4mm}
\end{figure}

\subsection{Executor Details}
\label{sec:implementation_encode}

To preserve the theoretical security level of FLASHE, careful implementation of the executor is needed.

\PHM{Encoding} As FLASHE is defined atop integer domain, before encrypting FP model updates we need to first \textit{quantize} them. To that end, we truncate an FP number with a clipping threshold $\alpha$, and then map the range $[-\alpha, \alpha]$ into the range $[0, 2^{M}-1]$ with $M$ as the quantization bit-width. The determination principle of $\alpha$ is the same as that in dACIQ \cite{zhang2020batchcrypt}, an analytical clipping technique used in cross-silo FL, except that we fit the distribution of update values merely using information from \textit{global update history}, instead of requiring private information from each client. This \textit{prevents} the privacy guarantees of FLASHE to be counteracted.

\PHB{Encryption} PRFs can be implemented with secure block ciphers, and we use Advanced Encryption Standard (AES), a symmetric block cipher standardized by NIST \cite{standard2001announcing}. The block size is fixed to 128 and the key length is set to 256. We implement AES using \texttt{PyCryptodome} \cite{pycryptodome} for enjoying the widely adopted hardware acceleration offered by \texttt{AES-NI} \cite{gueron2009intel}, which is supported across Intel processors.

\subsection{Mask Precomputation}
\label{sec:implementation_precomputation}

To further unleash the performance potential of FLASHE, we spot the opportunity in utilizing \textit{idle} time on the client-side. As implied in Section~\ref{sec:implementation_overview}, after uploading its local model, a client cannot proceed until it receives the aggregated model. Such a stall usually lasts for $5$-$7$ seconds as shown in Section~\ref{sec:evaluation_efficiency}. Since the generation of random masks is \textit{off the critical path}, at each iteration, we let clients spend their idle time \textit{precomputing} the PRFs needed in the next iteration. Note that the comparison between FLASHE and general HEs is still \textit{fair}, as they have no intermediate results to precompute. We evaluate the effectiveness of performing mask precomputation at Section~\ref{sec:evaluation_precomputation}.

\subsection{Baseline HE Schemes}
\label{sec:implementation_baseline}

We stress that all the parameters used already lead to the smallest ciphertext with acceptable security guarantees.

\PHM{Paillier \nopunct} FATE adopts \texttt{python-paillier} \cite{PythonPaillier} which converts FP numbers in an insecure manner\footnote{It encodes the significands whilst leaving the exponent in the clear.}. For fairness in security, we reframe the Paillier in FATE by applying our quantization scheme (\cref{sec:implementation_encode}). We set the key length to 2048 for medium security \cite{barker2020nist}.

\PHM{FV \nopunct} We implement FV with \texttt{Pyfhel} \cite{pyfhel} backed by Microsoft SEAL \cite{seal}. We set \texttt{plain\_modulus} and \texttt{poly\_modulus} to be 128 and 2048, respectively, the smallest parameter set in terms of ciphertext size such that the security can stand and decryption can succeed \cite{laine2017simple}. To enable batch encryption, we set them as 1964769281 and 8192, respectively. We also plug our quantization scheme as FV only takes in integers.

\PHM{CKKS \nopunct} We implement CKKS using \texttt{TenSEAL} \cite{tenseal2021} which is also built atop Microsoft SEAL. Again, for minimum necessary runtime overhead, we set \texttt{poly\_modulus\_degree} and \texttt{global\_scale} to be 8192, $2^{40}$, respectively. In particular, as we do not expect homomorphic multiplicaiton, we let \texttt{coeff\_mod\_bit\_sizes} to be \texttt{None} for best performance.

\section{Evaluation}
\label{sec:evaluation}

\begin{figure*}[tb]
    \centering
    \begin{subfigure}[b]{0.33\textwidth}
      \centering
      \includegraphics[width=\columnwidth]{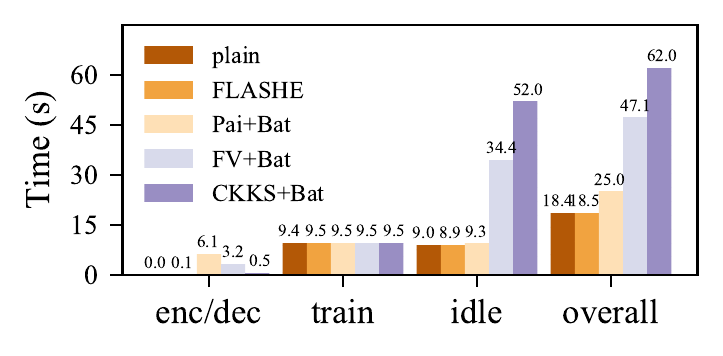}
      \caption{ResNet-20}
      \label{fig:cifar10_10_client_2_bd}
  \end{subfigure} \hfill
  \begin{subfigure}[b]{0.33\textwidth}
      \centering
      \includegraphics[width=\columnwidth]{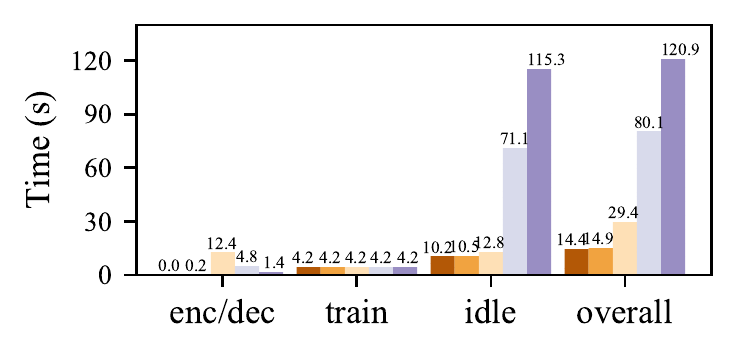}
      \caption{GRU}
      \label{fig:shakespeare_10_client_2_bd}
  \end{subfigure} \hfill
  \begin{subfigure}[b]{0.33\textwidth}
      \centering
      \includegraphics[width=\columnwidth]{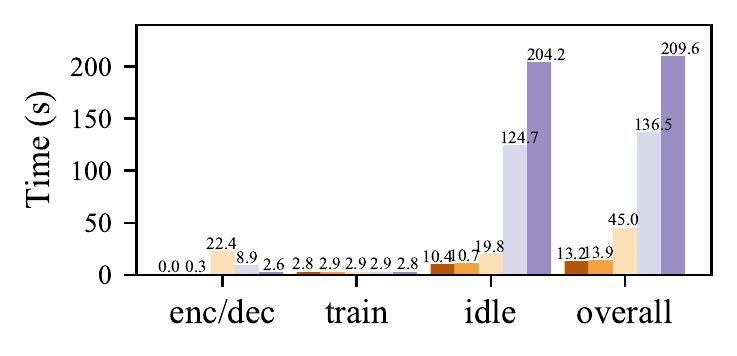}
      \caption{CNN}
      \label{fig:femnist_10_client_2_bd}
  \end{subfigure}
    \caption{Iteration time breakdown at client end with \textbf{plain}text, \textbf{FLASHE}, \textbf{Pai}llier$+$\textbf{Bat}ching, \textbf{FV}$+$\textbf{Bat}ching, and \textbf{CKKS}$+$\textbf{Bat}ching, where ``enc/dec'' stands for encryption and decryption.}
    \label{fig:pla_fla_pai_bc_bd}
\end{figure*}

\begin{figure}[tb]
    \centering
      \begin{subfigure}[c]{0.32\columnwidth}
          \includegraphics[width=\columnwidth]{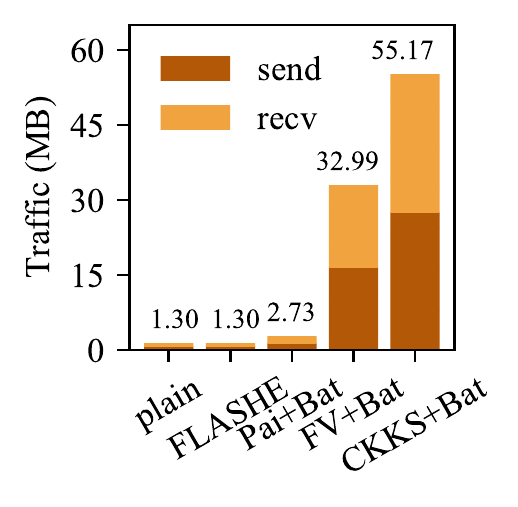}
          \caption{ResNet-20}
          \label{fig:cifar10_10_client_pla_fla_pai_bc_network}
      \end{subfigure}
      \begin{subfigure}[c]{0.32\columnwidth}
          \includegraphics[width=\columnwidth]{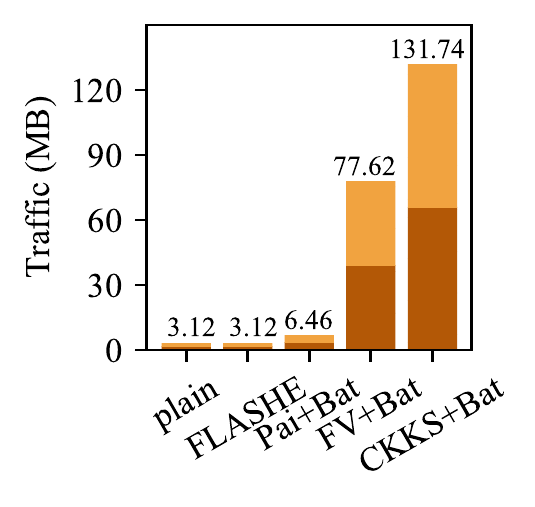}
          \caption{GRU}
          \label{fig:shakespeare_10_client_pla_fla_pai_bc_network}
      \end{subfigure}
      \begin{subfigure}[c]{0.32\columnwidth}
          \includegraphics[width=\columnwidth]{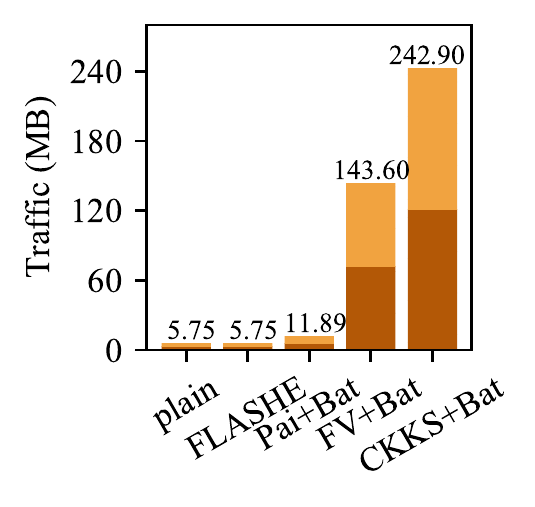}
          \caption{CNN}
          \label{fig:femnist_10_client_pla_fla_pai_bc_network}
      \end{subfigure}
    \caption{Network traffic in one training iteration with different HEs applied: \textbf{plain}text, \textbf{FLASHE}, \textbf{Pai}llier$+$\textbf{Bat}ching, \textbf{FV}$+$\textbf{Bat}ching, and \textbf{CKKS}$+$\textbf{Bat}ching.}
    \label{fig:pla_fla_pai_bc_network}
    \vspace{-4mm}
\end{figure}

We evaluate FLASHE's effectiveness for three different CV and NLP datasets\footnote{We will make FLASHE implementation and the workloads open-source.}. The key results are listed as follows.

\begin{itemize}
    \item FLASHE outperforms \textbf{batching versions of the three baselines} by $3.2\times$--$15.1\times$ in iteration time and $2.1\times$--$42.4\times$ in network footprint. Compared to \textbf{plaintext training}, FLASHE achieves near-optimal effiency with overhead $\leq6\%$ in time and $0\%$ in traffic (\cref{sec:evaluation_efficiency}).
    \item FLASHE also exhibits neal-optimality when it comes to the \textbf{end-to-end economic cost} ($\leq$5\%). Compared with batching versions of the baselines, the savings are significant (up to 73\%--94\%) (\cref{sec:evaluation_end}).
    \item FLASHE reduces the computational and communication overhead of general HEs by up to 63$\times$ and 48$\times$, respectively, when \textbf{sparsification is in use} (\cref{sec:evaluation_sparsification}).
    \item FLASHE's \textbf{double masking scheme} achieves optimal latency when the client dropout is mild (\cref{sec:evaluation_dropout}).
    \item FLASHE's \textbf{mask precomputation} effectively eliminates the cryptographic overhead (<0.1s) (\cref{sec:evaluation_precomputation}).
\end{itemize}

\subsection{Methodology}
\label{sec:evaluation_methodology}

\PHB{Experimental setup} We deploy a cluster in a geo-distributed manner where 10 clients collaboratively train machine learning models in five AWS EC2 data centers located in London, Tokyo, Ohio, N. California, and Sydney, respectively, where the associated WAN bandwidths has been profiled in Figure~\ref{fig:bandwidth}. We launch two compute-optimized \texttt{c5.4xlarge} instances (16 vCPUs and 32 GB memory) as two clients at each data center. In Ohio, we additionally run one server using a memory-optimized \texttt{r5.4xlarge} instance (16 vCPUs and 128 GB memory) to accommodate large memory footprint during model aggregation. Without loss of generality, we adopt the mainstream bulk synchronous parallel (BSP) design and employ model averaging algorithm \cite{mcmahan2017communication}.

\PHM{Datasets and models} As there is no common benchmark for cross-silo FL systems, we borrow two categories of applications with three real-world datasets that are widely adopted in cross-device scenarios. To better simulate cross-silo workloads, for each participant, we combine data from multiple device clients to form a larger local dataset.

\begin{itemize}
    \item \textit{Image Classification}: a small-scale CIFAR-10 dataset \cite{krizhevsky2009learning}, with 60k colored images in 10 classes, and a more realistic dataset, FEMNIST \cite{cohen2017emnist, caldas2018leaf}, with 805k greyscale images spanning 62 categories over 3.5k clients. We train ResNet-20 \cite{he2016deep} with 0.27M parameters, and a classical 5-layer CNN with 1.20M parameters \cite{chai2020tifl} to classify the images, respectively.
    \item \textit{Language Modeling}: a middle-scale Shakespeare dataset \cite{caldas2018leaf} with 106k sentence fragments over 2.2k clients, We train next word predictions with a customized GRU with 0.66M parameters \cite{bahdanau2014neural}.
\end{itemize}


\PHM{Hyperparameters} The minibatch size of each participant is 128 in all tasks. The learning rate is 1e-4 for training ResNet-20, 1e-2 for the GRU, and 5e-4 for the CNN. The local optimizer is Adam optimizer \cite{kingma2014adam}. The quantization bit-width is set to 16 when encoding FP numbers is needed.

\PHM{Metrics} We care about \textit{elasped time} and \textit{network footprint} both on a round basic. We also concern the \textit{monetary cost} in an end-to-end sense.

\subsection{Per Iteration Efficiency}
\label{sec:evaluation_efficiency}

In this section, we evaluate the efficiency of FLASHE in the granularity of \textit{round}. We run the experiments for 50 iterations and report the averaged results of the iteration time breakdown in Figure~\ref{fig:pla_fla_pai_bc_bd} and network traffic in Figure~\ref{fig:pla_fla_pai_bc_network}.

\PHM{FLASHE outperforms optimized general HEs.} We observe that FLASHE achieves non-trivial speedups compared to batch optimization versions of general HE schemes. Referring to the \textit{overall iteration time} at the client-end (\texttt{overall} in Figure~\ref{fig:pla_fla_pai_bc_bd}), we first see that FLASHE \textit{accelerates} Paillier$+$Batch-ing by 1.4$\times$--3.2$\times$ for the three applications. We also notice that as the model grows larger, the runtime advantage of FLASHE becomes \textit{more salient}. More significant runtime benefits of FLASHE can be observed in comparison with FV$+$Batching and CKKS$+$ Batching, where the speedups are 2.6$\times$--9.8$\times$ and 3.4$\times$--15.1$\times$, respectively. According to the breakdown, these improvements largely (29\%--99\%) stem from the reduction in \textit{communication time} (\texttt{idle} in Figure~\ref{fig:pla_fla_pai_bc_bd}).

Figure~\ref{fig:pla_fla_pai_bc_network} further visualizes the communication improvement in terms of \textit{traffic volume}. Compared to batch optimization version of Paillier, FV, and CKKS, FLASHE consistently \textit{brings down} around 2.1$\times$-42.4$\times$ of network footprint. This \textit{adheres to} the ciphertext size gap observed in our \textit{offline} HE benchmark results (yellow rows in Table~\ref{tab:he_performance}), where the ciphertext of FLASHE is 2.4$\times$--42.2$\times$ smaller than that of the three baselines. To sum up, the intrinsic complexity of general HE schemes prevents batch encryption from further unleashing their efficiency potential. FLASHE, on the other hand, is constantly lighter in both computation and communication due to the exemption of asymmetricity and versatility.

\PHM{FLASHE achieves near-optimality.} We next study the performance gap between FLASHE and \textit{plaintext training}. As depicted in Figure~\ref{fig:pla_fla_pai_bc_bd}, FLASHE approaches close to plaintext training in runtime performance by inducing a \textit{slight} overhead of 0.5\%--5.3\% for the three applications. Iteration time breakdown further shows that (1) the overhead of cryptographic operations is \textit{negligible} in both relative sense ($\leq$2.2\% of iteration span) and absolute sense ($\leq$0.8s) (\texttt{enc/dec}); and that (2) the \textit{major source} of FLASHE's overhead lies in the communication (up to 77\% of iteration span) (\texttt{idle}).

To dive deeper, as for (1), we notice that such encryption overhead is significantly less than that observed from our HE benchmark results (yellow rows in Table~\ref{tab:he_performance}). This can be mainly explained by the latency hiding impact of \textit{mask precomputation} (\cref{sec:implementation_precomputation}), and we provide more empirical evidence of its effectiveness in Section~\ref{sec:evaluation_precomputation}. In terms of (2), by referring to Figure~\ref{fig:pla_fla_pai_bc_network}, we can see that FLASHE does \textit{not} incur any network inflation for all applications. These are not accidental results as vanilla FLASHE \textit{by design} does not inflate the message size at all. In other words, although the communication is left to be a performance bottleneck of FLASHE, it is already highly optimized in the absolute sense. We thus conclude that FLASHE exhibits \textit{near-optimality} in both computation and communication.

\subsection{End-to-End Monetary Benefits}
\label{sec:evaluation_end}

We next investigate the \textit{end-to-end} monetary benefits by training the three models till convergence. As this can take exceedingly long time and high cost in real deployment, which is beyond our cloud budget, we instead \textit{project} the per-round profiling results (\cref{sec:evaluation_efficiency}) into a whole training span. To that end, we additionally need the \textit{round-to-convergence} information which can be obtained by local FL \textit{simulation} with the same precision (i.e., by employing 16-bit quantization). As measured, for ResNet-20, the training reaches the peak accuracy 72.0\% at the 374$^\text{th}$ epoch, while GRU and CNN ends up with 57.7\% accuracy at the 95$^\text{th}$ epoch, and 88.4\% accuracy at the 277$^\text{th}$ epoch, respectively. We depict the learning curves at Figure~\ref{fig:test_acc_sparsified} (yellow lines). We refer to the \textit{AWS pricing scheme} in accounting without loss of generality, and calculate the total cost of all clients, the cost of the server, and the overall cost of the three FL tasks, as listed in Table~\ref{tab:cmp_time_traffic}.

\begin{figure}[tb]
    \begin{subfigure}[b]{0.33\columnwidth}
        \centering
        \includegraphics[width=\columnwidth]{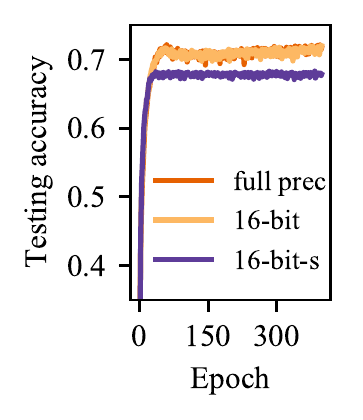}
        \caption{ResNet-20}
    \end{subfigure}
    \begin{subfigure}[b]{0.32\columnwidth}
        \centering
        \includegraphics[width=\columnwidth]{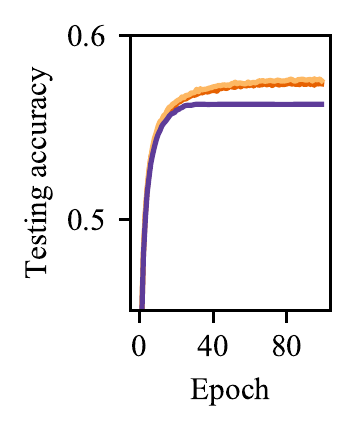}
        \caption{GRU}
    \end{subfigure}
    \begin{subfigure}[b]{0.32\columnwidth}
        \centering
        \includegraphics[width=\columnwidth]{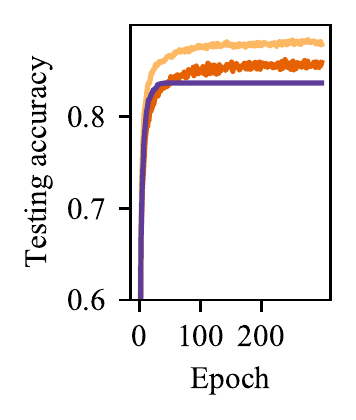}
        \caption{CNN}
    \end{subfigure}
    \caption{The convergence behaviours for the three models under \textbf{full prec}ision training, \textbf{16-bit} quantized training and \textbf{16-bit} quantized training with top 10\% \textbf{s}parsification.}
    \label{fig:test_acc_sparsified}
    \vspace{-3mm}
\end{figure}

\begin{table}[tb]
  \caption{Projected end-to-end training time, network traffic and monetrary cost when training the three models till convergence with different HE applied: \textbf{plain}text, \textbf{FLASHE}, \textbf{Pai}llier$+$\textbf{Bat}ching, \textbf{FV}$+$\textbf{Bat}ching, and \textbf{CKKS}$+$\textbf{Bat}ching.}
  \label{tab:cmp_time_traffic}
  \resizebox{\columnwidth}{!}{%
  \begin{tabular}{llccc}
    \toprule
    Model & Mode & Time (h) & Traffic (GB) & Total cost (\$) \\
    \midrule
    \multirow{5}{*}{ResNet} & Plain & 1.91 & 0.47 & \cellcolor[HTML]{E0E0E0} 17.69 \\
    & FLASHE & \textbf{1.92 (1.01$\times$)} & \textbf{0.47 (1.00$\times$)} & \cellcolor[HTML]{E0E0E0} \textbf{17.79 (1.01$\times$)} \\
     & Pai+Bat & 2.60 (1.36$\times$) & 1.00 (2.10$\times$) & \cellcolor[HTML]{E0E0E0} 24.16 (1.37$\times$) \\
     & FV+Bat & 4.89 (2.56$\times$) & 12.05 (25.38$\times$) & \cellcolor[HTML]{E0E0E0} 49.05 (2.77$\times$) \\
     & CKKS+Bat & 6.44 (3.37$\times$) & 20.15 (42.44$\times$) & \cellcolor[HTML]{E0E0E0} 66.06 (3.73$\times$) \\
     \midrule
     \multirow{5}{*}{GRU} & Plain & 0.38 & 0.29 & \cellcolor[HTML]{E0E0E0} 3.58 \\
     & FLASHE & \textbf{0.39 (1.03$\times$)} & \textbf{0.29 (1.00$\times$)} & \cellcolor[HTML]{E0E0E0} \textbf{3.71 (1.03$\times$)} \\
     & Pai+Bat & 0.78 (2.04$\times$) & 0.60 (2.07$\times$) & \cellcolor[HTML]{E0E0E0} 7.32 (2.04$\times$) \\
     & FV+Bat & 2.11 (5.56$\times$) & 7.20 (24.88$\times$) & \cellcolor[HTML]{E0E0E0} 21.88 (6.11$\times$) \\
     & CKKS+Bat & 3.19 (8.40$\times$) & 11.22 (38.75$\times$) & \cellcolor[HTML]{E0E0E0} 33.15 (9.15$\times$) \\
     \midrule
     \multirow{5}{*}{CNN} & Plain & 1.02 & 1.56 & \cellcolor[HTML]{E0E0E0} 9.85 \\
     & FLASHE & \textbf{1.07 (1.05$\times$)} & \textbf{1.56 (1.00$\times$)} & \cellcolor[HTML]{E0E0E0} \textbf{10.35 (1.05$\times$)} \\
     & Pai+Bat & 3.46 (3.41$\times$) & 3.22 (2.07$\times$) & \cellcolor[HTML]{E0E0E0} 32.86 (3.34$\times$) \\
     & FV+Bat & 10.50 (10.34$\times$) & 38.84 (24.97$\times$) & \cellcolor[HTML]{E0E0E0} 109.81 (11.14$\times$) \\
     & CKKS+Bat & 16.13 (15.88$\times$) & 56.70 (36.45$\times$) & \cellcolor[HTML]{E0E0E0} 167.59 (17.01$\times$) \\
     \bottomrule
    \end{tabular}
  }
\end{table}

\begin{figure*}[tb]
    \centering
    \begin{subfigure}[b]{0.33\textwidth}
      \centering
      \includegraphics[width=\columnwidth]{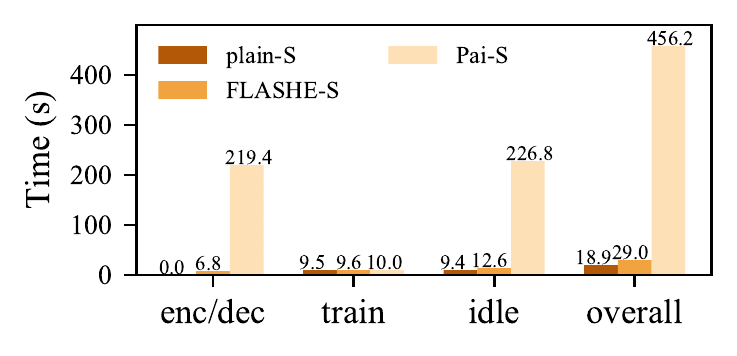}
      \caption{ResNet-20}
      \label{fig:cifar10_10_client_2_bd_sparsified}
  \end{subfigure} \hfill
  \begin{subfigure}[b]{0.33\textwidth}
      \centering
      \includegraphics[width=\columnwidth]{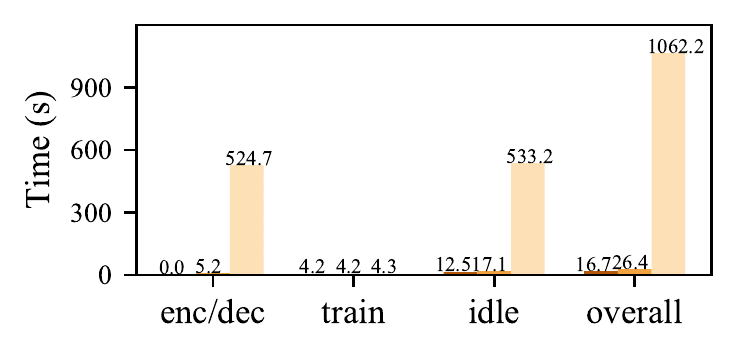}
      \caption{GRU}
      \label{fig:shakespeare_10_client_2_bd_sparsified}
  \end{subfigure} \hfill
  \begin{subfigure}[b]{0.33\textwidth}
      \centering
      \includegraphics[width=\columnwidth]{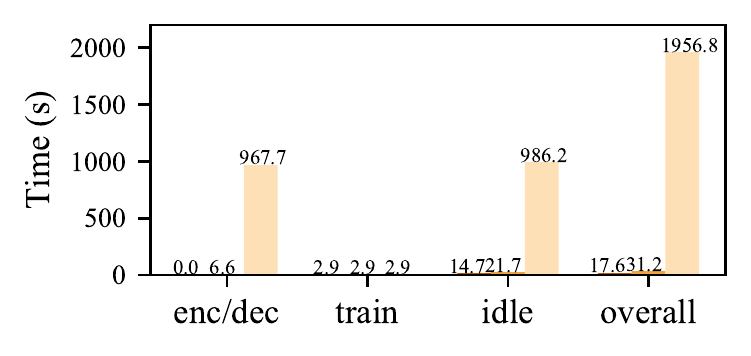}
      \caption{CNN}
      \label{fig:femnist_10_client_2_bd_sparsified}
  \end{subfigure}
    \caption{Iteration time breakdown at client with \textbf{plain}text, \textbf{FLASHE}, and \textbf{Pai}llier when top 10\% \textbf{s}parsification is employed.}
    \label{fig:pla_fla_pai_bc_bd_sparsified}
\end{figure*}

We first notice that the cost of FLASHE is still \textit{close-to-optimal} in the economic sense. Specifically, FLASHE is merely 1\%--5\% more expensive than plaintext training in terms of the total cost (the grey column in Table~\ref{tab:cmp_time_traffic}). When FLASHE is compared against the batch optimization version of the baseline HEs, its expense is \textit{consistently lower}, with 26\%--73\% cost reduced against Paillier$+$Batching, and 49\%--89\% and 69\%--94\% against FV$+$Batching and CKKS$+$Batching, respectively. This also indicates that the cost benefits can be more \textit{salient} as the model grows larger. In all, FLASHE is a cost-saving privacy-preserving solution in cross-silo FL.

\subsection{Compatibility with Sparsification}
\label{sec:evaluation_sparsification}

\begin{figure}[tb]
    \centering
      \begin{subfigure}[c]{0.32\columnwidth}
          \includegraphics[width=\columnwidth]{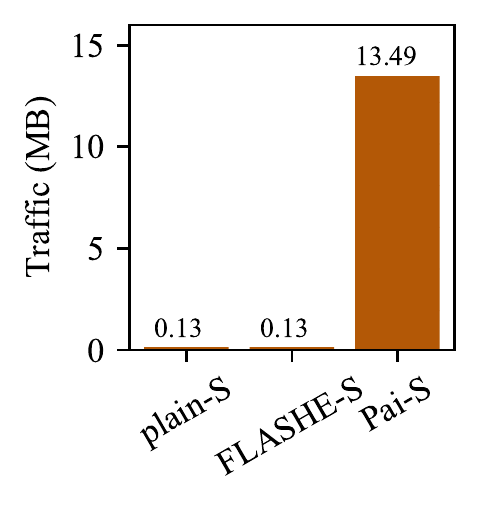}
          \caption{ResNet-20}
          \label{fig:cifar10_10_client_pla_fla_pai_bc_network_sparsified}
      \end{subfigure}
      \begin{subfigure}[c]{0.32\columnwidth}
          \includegraphics[width=\columnwidth]{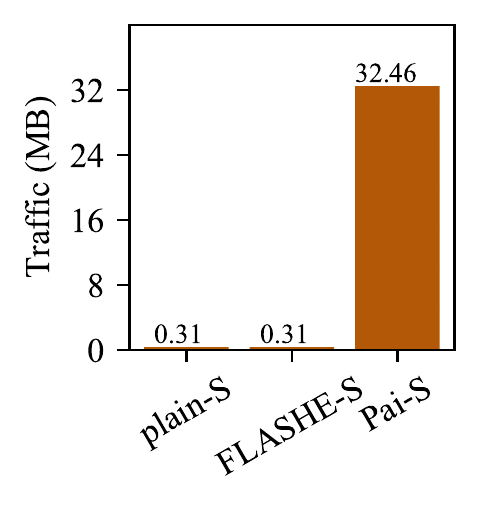}
          \caption{GRU}
          \label{fig:shakespeare_10_client_pla_fla_pai_bc_network_sparsified}
      \end{subfigure}
      \begin{subfigure}[c]{0.32\columnwidth}
          \includegraphics[width=\columnwidth]{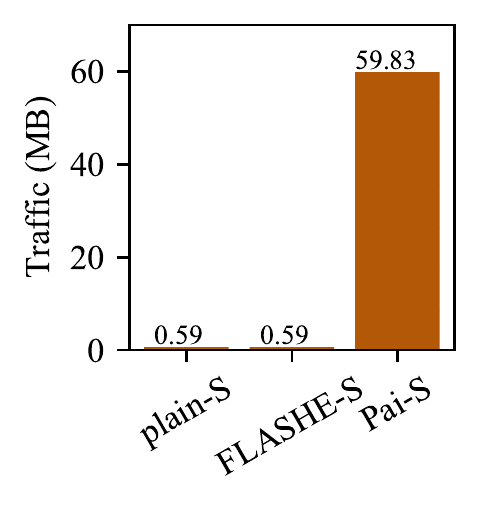}
          \caption{CNN}
          \label{fig:femnist_10_client_pla_fla_pai_bc_network_sparsified}
      \end{subfigure}

    \caption{Uplink network traffic in one training iteration with \textbf{plain}text, \textbf{FLASHE}, and \textbf{Pai}llier when top 10\% \textbf{s}parsification is employed.}
    \label{fig:pla_fla_pai_bc_network_sparsified}
\end{figure}

We next examine FLASHE's compatibility with top $s$\% sparsification. Without loss of generality, we set $s$ to 10\%, i.e., 90\% of the uplink traffic is removed. To reduce the number of extreme model update values being delayed \cite{lin2017deep}, we rise the sparsity every 10 rounds as follows: 20\%, 60\%, 80\%, 90\% (exponentially increase till 90\% and then stay stable). The corresponding convergence behaviors are given in Figure~\ref{fig:test_acc_sparsified} (purple lines) for readers' reference\footnote{We do not further fine-tune the sparsity scheduling as we are only interested in per-round computation and communication overhead.}. As indicated in Table~\ref{tab:he_performance} (grey rows), applying general HEs can counteract the performance benefits brought by sparsification due to their prohibitively high overhead without batch optimization. We here only include Paillier as our baseline since FV and CKKS are too expensive to be feasible. We measure the same metrics as in Section~\ref{sec:evaluation_efficiency} when the sparsity gets stable and reports the results in Figure~\ref{fig:pla_fla_pai_bc_bd_sparsified} and Figure~\ref{fig:pla_fla_pai_bc_network_sparsified}.

Combining the view of Figure~\ref{fig:pla_fla_pai_bc_network} (red bars) and Figure~\ref{fig:pla_fla_pai_bc_network_sparsified}, we first see that top 10\% sparsification successfully reduces the uplink traffic by 5$\times$ under plaintext training (the reason for why the factor is not 10$\times$ is that sparsification additionally requires mask information to send). However, when we employ Paillier, instead of enjoying any traffic reduction, the uplink traffic is inflated by around 20$\times$. This observation is consistent with our derivation in Section~\ref{sec:background_limit}. On the other hand, FLASHE \textit{significantly} reduces the runtime overhead of Paillier by 16$\times$--63$\times$ in overall computation, as well as bringing down the network footprint to the \textit{same} as that in plaintext training. We hence conclude that with its lightweight nature, FLASHE \textit{preserves} developers' incentive to perform sparsification.

\begin{figure*}[tb]
    \centering
    \begin{subfigure}[b]{0.33\textwidth}
        \centering
        \includegraphics[width=\columnwidth]{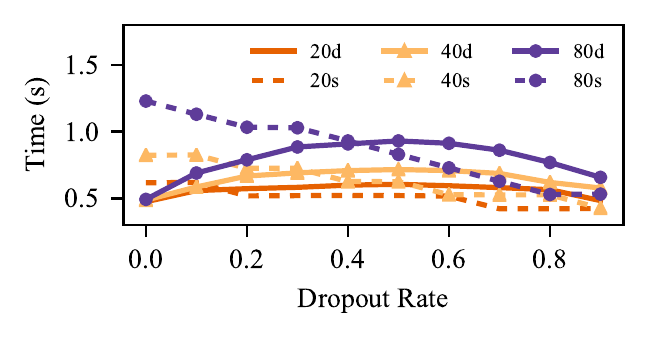}
        \caption{ResNet-20}
        \label{fig:double_resnet}
    \end{subfigure} \hfill
    \begin{subfigure}[b]{0.32\textwidth}
        \centering
        \includegraphics[width=\columnwidth]{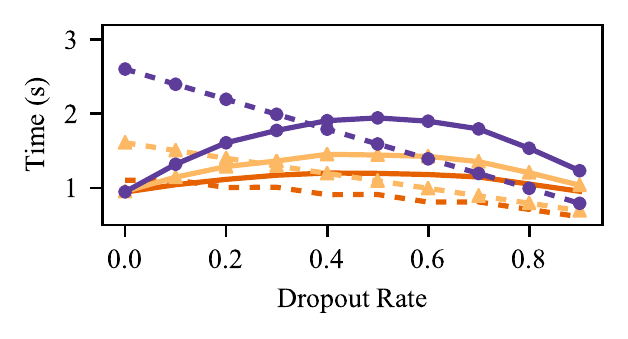}
        \caption{GRU}
        \label{fig:double_gru}
    \end{subfigure}
    \begin{subfigure}[b]{0.33\textwidth}
        \centering
        \includegraphics[width=\columnwidth]{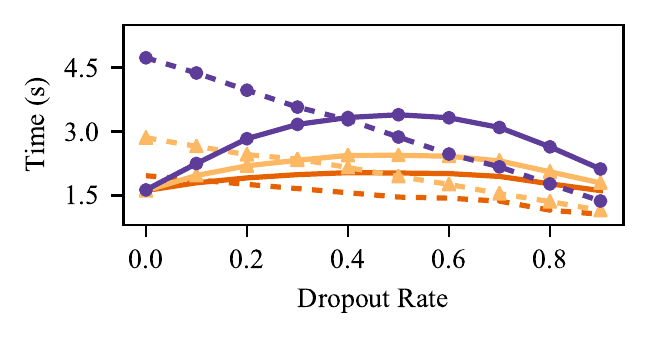}
        \caption{CNN}
        \label{fig:double_cnn}
    \end{subfigure}
    \caption{Impacts that dropout rate has on a client's HE-related time per round under \textbf{d}ouble/\textbf{s}ingle masking.}
    \label{fig:dns_empirical}
\end{figure*}

\begin{figure}[tb]
    \centering
      \begin{subfigure}[b]{0.28\columnwidth}
          \includegraphics[width=\columnwidth]{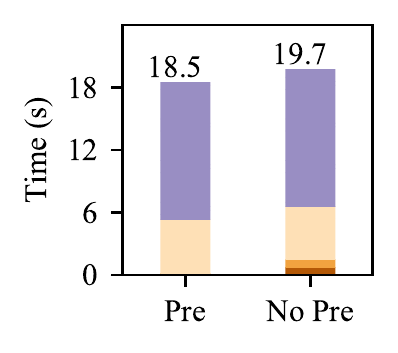}
          \caption{ResNet-20}
          \label{fig:resnet_opt}
      \end{subfigure} \hfill
      \begin{subfigure}[b]{0.28\columnwidth}
          \includegraphics[width=\columnwidth]{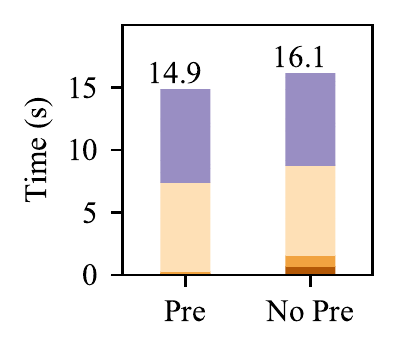}
          \caption{GRU}
          \label{fig:gru_opt}
      \end{subfigure} \hfill
      \begin{subfigure}[b]{0.41\columnwidth}
          \includegraphics[width=\columnwidth]{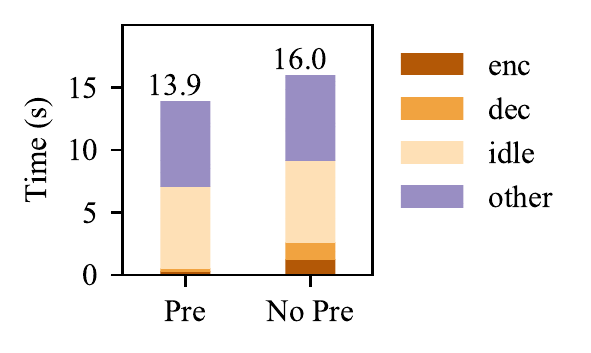}
          \caption{CNN}
          \label{fig:cnn_opt}
      \end{subfigure}
    \caption{Breakdown of iteration time for FLASHE with (\textbf{Pre}) and without mask precomputation (\textbf{No Pre}).}
    \label{fig:precompute}
  \end{figure}

\subsection{Dropout Impacts on Decryption Time}
\label{sec:evaluation_dropout}

We next narrow down the evaluation scope to specific design choices in FLASHE. We first justify to perform \textit{double masking} in encryption (\cref{sec:vanilla_base}). As mentioned, we expect \textit{no or mild} client dropout as the common case in cross-silo FL and thus employ double masking for enjoying high efficiency in decryption brought by \textit{mask neutralization} during model aggregation. Figure~\ref{fig:dns_empirical} reports the HE-related (i.e., encryption and decryption) time per round at the client end under double masking (solid lines) and single masking scheme (dashed lines) when the dropout rate is swept from 0 to 0.9. In particular, for each dropout rate under the double masking scheme, we randomly select the corresponding number of clients over 50 times to empirically estimate the average case\footnote{Note that in double masking scheme, the number of counteracted masks varies even under the same dropout rate.}.

We see that cases with different numbers of clients (20, 40, 80) share the \textit{same trend} under both the double masking and single masking scheme. We hereby focus on the case when there are 80 clients (purple lines). We observe that when the dropout rate is no more than 0.4, the HE-related time in double masking is \textit{consistently less} than that in single masking, where the latency gap is the most \textit{salient} when there is no dropout (0.7--3.1s for the three applications) and \textit{diminishes} as the dropout rate increases. The \textit{crossover point} occurs when the dropout rate is 0.4, after which single masking becomes a \textit{slightly better} choice as the dropout becomes more severe. These observations comply with Theorem~\ref{the:comp} (\cref{sec:vanilla_dns}). In brief, double masking outperforms single masking in time when the dropout is infrequent and modest, which is the case in cross-silo FL.

\subsection{Effectiveness of Mask Precomputation}
\label{sec:evaluation_precomputation}

We finally examine the effectiveness of \textit{mask precomputation}  (\cref{sec:implementation_precomputation}), which is made possible by the \textit{stateful} nature of FLASHE. We evaluate the impact on \textit{iteration time} as the network traffic is not affected. Figure~\ref{fig:precompute} illustrates the difference between the presence and absence of mask precomputation. We observe that mask precomputation expedites each iteration by $1.2$s, $1.2$s, and $2.1$s for ResNet-20, GRU, and CNN, respectively, indicating that \textit{more performance potential} can be unlocked by mask precomputation as the size of the model grows larger.
The time breakdown further confirms that such performance benefits stem from the \textit{elimination} of en-/decryption time (making it no more than 0.1s). Note that the application of mask precomputation of FLASHE does \textit{not} impair the fairness when FLASHE is compared against the baseline HEs, as they are by design \textit{stateless} and thus not compatible with precomputation. In all, mask precomputation reinforces the performance advantage of FLASHE.

\section{Related Work}
\label{sec:related}

\PHB{Additively Symmetric Homomorphic Encryption} Only few additively symmetric HEs are seen in the literature. \textit{One line} of the attempts encrypts a batch of integers by multiplication with random invertible matrices \cite{chan2009symmetric, kipnis2012efficient}. However, they are shown to be insecure \cite{vizar2015cryptanalysis, tsaban2015cryptanalysis}. \textit{The other line} of work focuses on scalar encryption\cite{xiao2012efficient, papadimitriou2016big, van2010fully}, among which the ASHE proposed in Seabed \cite{papadimitriou2016big} is the most lightweight and closely related to our scheme. Motivated from data analytic applications, ASHE also spots the opportunity for performance improvement in double masking. Still, FLASHE is \textit{distinct} from it:

\begin{enumerate}
    \item We are the first to \textit{extensively characterize the overhead} of traditional HEs through micro-benchmarks (\cref{sec:background_he}, \cref{sec:background_limit}) and real deployment (\cref{sec:evaluation_efficiency}, \cref{sec:evaluation_end}), whereby we point out their unnecessity in FL aggregation (\cref{sec:vanilla_intuition}), which has been overlooked in the literature.
    \item We identify that ASHE would perform suboptimally under \textit{sparsification} scenarios and thus propose to alternate between the single and double masking scheme (\cref{sec:sparse}) with some side-evaluation (\cref{sec:evaluation_sparsification} and \cref{sec:evaluation_dropout}).
\end{enumerate}

\section{Conclusion}
\label{sec:conclusion}

In this paper, we present FLASHE to secure the model aggregation process in cross-silo FL. Compared to optimized versions of general HEs, FLASHE significantly outperforms in runtime performance, supports model sparsification, while retaining the same level of security guarantee.

\bibliographystyle{ACM-Reference-Format}
\bibliography{paper}



\end{document}